\documentclass{article}
\usepackage[utf8]{inputenc}
\usepackage[T1]{fontenc}
\usepackage{amsmath}
\usepackage{amssymb}
\usepackage{amsthm}
\usepackage{microtype}
\usepackage{cite}
\usepackage{enumerate}
\usepackage{graphicx}
\usepackage{xspace}
\usepackage{hyperref}
\usepackage{xcolor}
\usepackage{tikz}
\usetikzlibrary{arrows}
\usepackage{etoolbox}

\newtoggle{long}
\toggletrue{long}
\newcommand{\LV}[1]{\iftoggle{long}{#1}{}}

\newcommand{\SV}[1]{\iftoggle{long}{}{#1}}

\DeclareMathOperator{\Perf}{Perf}
\newcommand{\N}{\mathbb{N}}
\newcommand{\AlgSimple}{\ensuremath{\textsc{Simple}}}
\newcommand{\AlgKarl}{\ensuremath{\textsc{Linear}}}
\newcommand{\AlgJakub}{\ensuremath{\textsc{Binary}}}

\newtheorem{theorem}{Theorem}{\bfseries}{\itshape}
\newtheorem{lemma}{Lemma}{\bfseries}{\itshape}

\graphicspath{{./pictures/}}

\title{Online Checkpointing with Improved Worst-Case Guarantees}
\author{Karl Bringmann\thanks{Karl Bringmann is a recipient of the \emph{Google Europe Fellowship in Randomized Algorithms}, and this research is supported in part by this Google Fellowship.}
\and Benjamin Doerr\and Adrian Neumann\and Jakub Sliacan}

\begin{document}
\maketitle
\begin{abstract} 
  In the online checkpointing problem, the task is to continuously maintain a set of $k$ checkpoints that allow to rewind an ongoing computation faster than by a full restart. The only operation allowed is to replace an old checkpoint by the current state. Our aim are checkpoint placement strategies that minimize rewinding cost, i.e., such that at all times $T$ when requested to rewind to some time $t \le T$ the number of computation steps that need to be redone to get to $t$ from a checkpoint before $t$ is as small as possible. In particular, we want that the closest checkpoint earlier than $t$ is not further away from $t$ than $q_k$ times the ideal distance $T / (k+1)$, where $q_k$ is a small constant.
 
  Improving over earlier work showing $1 + 1/k \le q_k \le 2$, we show that $q_k$ can be chosen asymptotically less than $2$. We present algorithms with asymptotic discrepancy $q_k \le 1.59 + o(1)$ valid for all $k$ and $q_k \le \ln(4) + o(1) \le 1.39 + o(1)$ valid for $k$ being a power of two. Experiments indicate the uniform bound $p_k \le 1.7$ for all $k$. For small $k$, we show how to use a linear programming approach to compute good checkpointing algorithms. This gives discrepancies of less than $1.55$ for all $k < 60$. 
  
  We prove the first lower bound that is asymptotically more than one, namely $q_k \ge 1.30 - o(1)$. We also show that optimal algorithms (yielding the infimum discrepancy) exist for all~$k$. 
\end{abstract}

\section{Introduction}

Checkpointing means storing \LV{selected }intermediate states of a long sequence of computations. This allows reverting the system to a\LV{n arbitrary} previous state much faster, since only the computations from the preceding checkpoint have to be redone. Checkpointing is one of the fundamental techniques in computer science. Classic results date back to the seventies~\cite{ChandyR72}, more recent topics are checkpointing in distributed~\cite{ElnozahyAWJ02}, sensor network~\cite{Osterlindetal09}, or cloud~\cite{YiKA10} architectures.

Checkpointing usually involves a \LV{careful }trade-off between the speed-up of reversions to previous states and the costs incurred by setting checkpoints (time, memory). Much of the classic literature (see~\cite{Gelenbe79} and the references therein) studies checkpointing with the focus of \LV{gaining }fault tolerance against immediately detectable faults. Consequently, only reversions to the most recent checkpoint are needed. However, setting a checkpoint can be \SV{expensive}\LV{highly time consuming}, because the whole system state has to be copied to secondary memory. In such scenarios, the central question is how often to set a checkpoint such that the expected time spent on setting checkpoints and redoing computations from the last checkpoint is minimized (under a stochastic failure model and further, possibly time-dependent~\cite{TouegB84}, assumptions on the cost of setting a checkpoint).

In this work, we will regard a checkpointing problem of a different nature. If not fault-tolerance of the system is the aim of checkpointing, then often the checkpoints can be kept in main memory. Applications of this type arise in data compression~\cite{BernGRS94} and numerics~\cite{HeuvelineW06,StummW10}. In such scenarios, the cost of setting a checkpoint is small compared to the cost of the regular computation. Consequently, the memory used by the stored checkpoints is the bottleneck.

The first to provide a\LV{n abstract} framework independent of a particular application\LV{ in mind} were Ahlroth, Pottonen and Schumacher~\cite{AhlrothPS11}. They do not make assumptions on which reversion \LV{to previous states }will be requested, but simply investigate how checkpoints can be set in an online fashion such that at all times their distribution is balanced over the \LV{total }computation history.

They assume that the system is able to store up to $k$ checkpoints (plus a free checkpoint at time $0$). At any point in time, a \LV{previous }checkpoint may be discarded and replaced by the current \LV{system }state as new checkpoint. Costs incurred by such a change are ignored. However, as it turns out, good checkpointing algorithms do not set checkpoints very often. For all algorithms discussed in the remainder of this paper, each checkpoint is changed only $O(\log T)$ times up to time $T$.

\LV{\paragraph{The max-ratio discrepancy measure.} }
Each set of checkpoints, together with the current state and the state at time $0$, partitions the time from the process start to the current time $T$ into $k+1$ disjoint intervals. Clearly, without further problem-specific information, an ideal set of checkpoints would lead to all these intervals having identical length. Of course, this is not possible \LV{at all points in time }due to the restriction that new checkpoints can only be set on the current time. As discrepancy measure for a checkpointing algorithm, Ahlroth et~al.\ mainly regard the \emph{maximum gap ratio}, that is, the maximum ratio of the longest interval vs.\ the shortest interval (ignoring the last interval\LV{, which can be arbitrarily small}), over all current times $T$. They show that there is a simple algorithm achieving a discrepancy of two: Start with all checkpoints placed evenly, e.g., at times $1, \ldots, k$. At an even time $T$, remove one of the checkpoints at an odd time and place it at $T$. This will lead to all checkpoints being at the even times $2, 4, \ldots, 2k$ when $T = 2k$ is reached. Since these checkpoints form a scaled copy of the initial ones, we can continue in this fashion forever. It is easy to see that at all times, the intervals formed by neighboring checkpoints have at most two different lengths, the larger being twice the smaller\LV{ in case that not all lengths are equal}. This shows the discrepancy of two. 

\LV{It seems tempting to believe that one can do better, but, in fact, n}\SV{N}ot much improvement is possible for general $k$ as shown by the lower bound of $2^{1 - 1/\lceil(k+1)/2\rceil} = 2 (1 - o(1))$. For small values of $k$, namely $k = 2, 3, 4,$ and $5$, better upper bounds of approximately $1.414, 1.618, 1.755,$ and $1.755$, respectively, were shown.

\LV{\paragraph{The maximum distance discrepancy measure.} }In this work, we shall regard a different, and, as we find, more natural discrepancy measure. Recall that the \LV{actual }cost of reverting to a particular state is basically the cost of redoing the computation from the preceding checkpoint to the desired point in time. Adopting a worst-case view on the time to revert to, our aim is to keep the length of the longest interval small (at all times). Note that with time progressing, the interval lengths necessarily grow. Hence a fair point of comparison is the length $T / (k+1)$ of a longest interval in the (at time $T$) optimal partition \LV{of the time frame }into equal length intervals. For this reason, we say that a checkpointing algorithm (using $k$ checkpoints) has \emph{maximum distance discrepancy} (or simply discrepancy) $q$ if it places the checkpoints in such a way that at all times $T$, the longest interval has length at most $q T / (k+1)$. We denote by $q^*(k)$ the infimum discrepancy among all checkpointing algorithms using $k$ checkpoints. 

This \LV{maximum distance }discrepancy measure was suggested in~\cite{AhlrothPS11}. There it was remarked that an upper bound of $\beta$ for the gap-ratio \LV{discrepancy }implies an upper bound of $\beta (1 + \frac 1k)$ for the maximum distance discrepancy. Furthermore, for all $k$ an upper bound of $2$ and a lower bound of $1 + \frac 1k$  is shown for $q^*(k)$. For $k = 2, 3, 4,$ and $5$, stronger upper bounds of $1.785, 1.789, 1.624,$ and $1.565$, respectively, were shown.

\paragraph{Our results.} 
In this work, we show that the optimal discrepancy $q^*(k)$ is asymptotically bounded away from both one and two by a constant. We present algorithms that achieve a discrepancy of $1.59 + O(1/k)$ for all $k$ (Theorem~\ref{thm:algkarl}), and a discrepancy of $\ln(4) + o(1) \le 1.39 + o(1)$ for $k$ being any power of two (Theorem~\ref{thm:algjacub}). For small values of $k$, and this might be an interesting case in applications with memory-consuming states, we show superior bounds by suggesting a class of checkpointing algorithms and optimizing their parameters via a combination of exhaustive search and linear programming (Table~\ref{table:small_k_upper_bounds}). Experiments suggest $q^*(k) \le 1.7$ for all $k$ (Sect.~\ref{sec:linear_programming}). We complement these constructive results by a lower bound for $q^*(k)$ of $2 - \ln(2) - O(1/k) \ge 1.3 - O(1/k)$ (Theorem~\ref{thm:lower}). We round off this work with a natural, but seemingly nontrivial result: We show that for each $k$ there is indeed a checkpointing algorithm having discrepancy $q^*(k)$ (Theorem~\ref{thm:existence}). In other words, the infimum in the definition of $q^*(k)$ can be replaced by a minimum.

\SV{Due to space restrictions, some proofs are omitted. They can be found in the full version of this paper~\cite{arxiv_version}.}

\section{Notation and Preliminaries}\label{sec:notation}

\LV{In the checkpointing problem with $k$ checkpoints}\SV{In our setting}, we consider a long running computation during which we can choose to save the state at the current time $T$ in a checkpoint, or delete a previously placed one. We assume that our storage can hold at most $k$ checkpoints simultaneously, and that there are implicit checkpoints at time $t=0$ and the current time. We disregard any costs for placing or maintaining checkpoints. Consequently, we may assume that we only delete a previous checkpoint when a new one is placed.

An \emph{algorithm} for checkpoint placement can be described by two infinite sequences. First, the time points where new checkpoints are placed, i.e., a non-decreasing infinite sequence of reals $t_1 \le t_2 \le \ldots$ such that $\lim_{i \to \infty} t_i = \infty$, and second, a rule that describes which old checkpoints to delete when a new one is installed, that is, an injective function $d : [k+1..\infty) \to \N$ satisfying $d_i < i$ for all $i \ge k+1$. 

The algorithm $A$ described by $(t,d)$ will start with $t_1, \ldots, t_k$ as initial checkpoints and then for each $i \ge k+1$, at time $t_i$ remove the checkpoint at $t_{d_i}$ and set a new checkpoint at the current time $t_i$. We call the act of removing a checkpoint and placing a new one a \emph{step} of $A$. Note that there is little point in setting the first $k$ checkpoints to zero, so to make the following discrepancy measure meaningful, we shall always require that $t_k > 0$.

We call the set of checkpoints that exist at time $T$ \emph{active}. The\SV{se}\LV{ active checkpoints}, together with the two implicit checkpoints at times $0$ and $T$, define a sequence of $k+1$ interval lengths $\mathcal{L}_T = (\ell_0,\ldots,\ell_k)$. The \emph{discrepancy} $q(A,T)$ of an algorithm $A$ at time $T\ge t_k$ \LV{is a measure of how long the maximal interval is, normalized to be one if all intervals have the same length. It }is calculated as
\LV{\[
	q(A,T) := (k+1) \bar \ell_{T} / T,
\]}\SV{$  q(A,T) := (k+1) \bar \ell_{T} / T$, }%
where $\bar \ell_T = ||\mathcal{L}_T||_{\infty}$ denotes the length of the longest interval.\LV{ We also use the term \emph{discrepancy} when we refer to the scaled length of a single interval.}

The discrepancy $\Perf(A)$ of an algorithm $A$ then is the supremum over the discrepancy over all times $T$, i.e.,
\[
	\Perf(A) := \sup_{T \ge t_k} q(A,T).
\]
Hence the discrepancy of an algorithm would be $1$, if it \SV{always }kept its checkpoints evenly distributed\LV{ at all times}. Denote the infimum discrepancy of a checkpointing algorithm using $k$ checkpoints by
\[
  q^*(k) := \inf_A \Perf(A),
\] 
where $A$ runs over all algorithms using $k$ checkpoints. We will see in Sect.~\ref{sec:existence} that algorithms achieving this discrepancy actually exist.

Note that we allow checkpointing algorithms to set checkpoints at continuous time points. One can convert any such algorithm to an algorithm with integral checkpoints by rounding all checkpointing times $t_i$ down. This does not increase the discrepancy since $\lfloor t_i \rfloor - \lfloor t_{i-1} \rfloor \le t_i-t_{i-1}+1$, but with discrete time there are at most $\lfloor t_i \rfloor - \lfloor t_{i-1} \rfloor -1$ steps to recompute in this interval.

In the definition of the discrepancy, the supremum is never attained at some $T$ with $t_i< T < t_{i+1}$ for any $i$, \SV{further, at any time $t_i$ it suffices to consider the two newly created intervals by deleting and storing a checkpoint. This is made precise in the following two lemmas.}\LV{ as shown in the following lemma.}

\begin{lemma}\label{lem:steps_are_enough}
  In the definition of the discrepancy it suffices to consider times $T = t_i$ for all $i \ge k$, i.e., we have
  \[
    \Perf(A) = \sup_{i \ge k} q(A,t_i).
  \]
\end{lemma}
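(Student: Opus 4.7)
The plan is to show that for each fixed $i \ge k$, the supremum of $T \mapsto q(A,T)$ on the half-open interval $[t_i, t_{i+1})$ is dominated by $\max(q(A,t_i), q(A,t_{i+1}))$. Since these intervals partition $[t_k, \infty)$, the lemma then follows by taking the supremum over $i$, combined with the trivial reverse inequality $\sup_{i \ge k} q(A,t_i) \le \sup_{T \ge t_k} q(A,T)$.

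To analyze $q(A,T)$ on $[t_i, t_{i+1})$, I would first observe that on this interval the set of active checkpoints does not change, since no placement or deletion happens strictly between $t_i$ and $t_{i+1}$. Let $c \le t_i$ denote the latest active checkpoint. Then the $k$ leftmost intervals (those lying between $0$ and $c$) have fixed lengths whose maximum I call $L$, while the rightmost interval, from $c$ to the current time $T$, has length $T-c$ which grows linearly in $T$. Hence $\bar\ell_T = \max(L, T-c)$ and $q(A,T) = (k+1)\max(L, T-c)/T$. When $T - c \le L$ the numerator is constant, so $q(A,T)$ is decreasing in $T$ and is maximized on the interval at $T = t_i$. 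When $T - c \ge L$, we have $q(A,T) = (k+1)(1 - c/T)$, which is strictly increasing in $T$, so the supremum on $[t_i,t_{i+1})$ equals $\lim_{T \to t_{i+1}^-} q(A,T)$.

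The remaining step is to bound this left limit by $q(A,t_{i+1})$. At time $t_{i+1}$ the algorithm removes the active checkpoint at $t_{d_{i+1}}$, which merges two adjacent intervals into a single longer one; it then places a new checkpoint at the current time $t_{i+1}$, which coincides with the implicit end-of-time checkpoint and hence introduces no new interval. Consequently the maximum interval length can only weakly increase across this step, i.e. $\bar\ell_{t_{i+1}} \ge \bar\ell_{t_{i+1}^-}$, and since $T = t_{i+1}$ is the same denominator in both expressions, $q(A,t_{i+1}) \ge \lim_{T \to t_{i+1}^-} q(A,T)$. Combining with the previous paragraph yields $\sup_{T \in [t_i, t_{i+1})} q(A,T) \le \max(q(A,t_i), q(A,t_{i+1}))$, completing the proof.

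The main subtlety is this boundary argument at $t_{i+1}$: the supremum on the half-open interval need not be attained and may only be realized in the limit from the left, so one must justify transferring the bound from $t_{i+1}^-$ to the sampled point $t_{i+1}$. The degenerate case $t_i = t_{i+1}$ makes the interval empty and is handled trivially, and the assumption $t_k > 0$ from Section~\ref{sec:notation} ensures the base interval $[t_k, t_{k+1})$ on which $q$ is first defined is itself well-behaved.
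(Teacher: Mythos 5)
Your proof is correct and follows essentially the same approach as the paper: both decompose the intervals at an intermediate time $T \in (t_i, t_{i+1})$ into the $k$ unchanged intervals contained in $[0, t_i]$ (whose discrepancy only improves as $T$ grows, so is bounded by the value at $t_i$) and the current-time interval $[t_i, T]$ (whose discrepancy worsens as $T$ grows, so is bounded by the value at $t_{i+1}$). The paper shortcuts your monotonicity-and-limit analysis by directly comparing each interval's discrepancy at time $T$ to the discrepancy of the same interval at $t_i$, respectively of $[t_i,t_{i+1}]$ at $t_{i+1}$, which sidesteps the need to justify the passage from the left limit at $t_{i+1}^-$ to the sampled value at $t_{i+1}$.
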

\LV{
\begin{proof}
  Consider a time $T$ with $t_i< T < t_{i+1}$ for any $i \ge k$. We show that 
  \[
    q(A,T) \le \max\{ q(A,t_i), q(A,t_{i+1}) \}.
  \]
  Denote the active checkpoints at time $T$ by $x_1,\ldots,x_k$. Note that $x_k = t_i$, since $t_i$ was the last time we set a checkpoint. 
  Consider the interval $[x_k,T]$. Its discrepancy is exactly
  \[
    (k+1) \frac{T-x_k}{T} \le (k+1) \frac{t_{i+1} - x_k}{t_{i+1}} \le q(A,t_{i+1}).
  \]
  Any other interval at time $T$ is of the form $[x_{j-1},x_j]$ for some $1 \le j \le k$ (where we set $x_0 := 0$), whose discrepancy is
  \[
    (k+1) \frac{x_j - x_{j-1}}{T} \le (k+1) \frac{x_j - x_{j-1}}{t_i} \le q(A,t_i).
  \]
  Together, this proves the claim.~
\end{proof}

To bound the discrepancy of an algorithm we need to bound the largest of the $q(A,t_i)$ over all $i \ge k$. For this purpose, it suffices to look at the two newly created intervals at time $t_i$ for each $i$, as made explicit by the following lemma. 

}
\begin{lemma}\label{lem:only_new_intervals} 
Let $i > k$ and let $\ell_1, \ell_2$ be the lengths of the two newly created intervals at time $t_i$ due to the removal and the insertion of a checkpoint. Then
\[
  \max\{q(A,t_{i-1}),q(A,t_i)\} = \max\{ q(A,t_{i-1}), (k+1)\ell_1/t_i, (k+1)\ell_2/t_i\}.
\]
\end{lemma}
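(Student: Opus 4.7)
The plan is to prove the equality by establishing both inequalities separately. The direction ``RHS $\le$ LHS'' is immediate: since $\ell_1$ and $\ell_2$ are entries of $\mathcal{L}_{t_i}$, we have $\ell_1,\ell_2\le\bar\ell_{t_i}$, whence $(k+1)\ell_1/t_i$ and $(k+1)\ell_2/t_i$ are each at most $q(A,t_i)\le\max\{q(A,t_{i-1}),q(A,t_i)\}$.

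For the other direction I would make the change in interval lengths from $t_{i-1}$ to $t_i$ explicit. Writing the active checkpoints at $t_{i-1}$ as $x_1<\dots<x_k=t_{i-1}$ (with $x_0:=0$), the sequence $\mathcal{L}_{t_{i-1}}$ consists of $x_m-x_{m-1}$ for $m=1,\dots,k$ together with the trailing trivial interval $[t_{i-1},t_{i-1}]$ of length $0$. At time $t_i$ we remove the checkpoint $x_j=t_{d_i}$ and insert a new checkpoint at $t_i$, which (i) merges $[x_{j-1},x_j]$ and $[x_j,x_{j+1}]$ into a single interval of length $\ell_1=x_{j+1}-x_{j-1}$, and (ii) turns the former trailing zero-length interval into $[t_{i-1},t_i]$ of length $\ell_2=t_i-t_{i-1}$, with a fresh trailing zero-length interval $[t_i,t_i]$ taking its place. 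Every remaining entry of $\mathcal{L}_{t_i}$ already appears in $\mathcal{L}_{t_{i-1}}$, so $\bar\ell_{t_i}\le\max\{\bar\ell_{t_{i-1}},\ell_1,\ell_2\}$.

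Multiplying by $(k+1)/t_i$ and bounding $(k+1)\bar\ell_{t_{i-1}}/t_i\le(k+1)\bar\ell_{t_{i-1}}/t_{i-1}=q(A,t_{i-1})$ (which uses $t_i\ge t_{i-1}$) yields $q(A,t_i)\le\max\{q(A,t_{i-1}),(k+1)\ell_1/t_i,(k+1)\ell_2/t_i\}$. Taking the maximum with $q(A,t_{i-1})$ on both sides then gives the claimed equality. The only point requiring care is the bookkeeping of intervals in the middle paragraph; corner cases such as $d_i=i-1$ (removing the most recently placed checkpoint, so that the ``merged'' interval coincides with what would otherwise be $[t_{i-1},t_i]$) cause no trouble because the bound $\bar\ell_{t_i}\le\max\{\bar\ell_{t_{i-1}},\ell_1,\ell_2\}$ remains valid regardless of which structural role each new interval plays.
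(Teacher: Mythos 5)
Your proposal is correct, and it is essentially the paper's argument: the paper also notes that any interval at $t_i$ other than the two newly created ones already existed at $t_{i-1}$ and, since we divide by the (growing) current time, contributed at least as much to the discrepancy at $t_{i-1}$. You package this as the inequality $\bar\ell_{t_i}\le\max\{\bar\ell_{t_{i-1}},\ell_1,\ell_2\}$ and then divide, while the paper does a brief case analysis on which interval is longest at $t_i$; the underlying observation and the handling of corner cases are the same.
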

\LV{
\begin{proof} 
If $\ell_1$ or $\ell_2$ is the longest interval at time $t_i$ the claim holds. Any other interval existed already at time $t_{i-1}$ and had a larger discrepancy at this time, as we divide by the current time to compute the discrepancy. Thus, if any other interval is the longest at time $t_i$, then we have $q(A,t_{i-1}) \ge q(A,t_i)$ and the claim holds again.~
\end{proof}
}
\SV{Intuitively, both lemmas rely on the fact that unchanged intervals improve in discrepancy as time progresses. In Lemma~\ref{lem:only_new_intervals}, all intervals improve in discrepancy, except the two that change. In Lemma~\ref{lem:steps_are_enough}, the interval $[t_n,T]$, defined by the last checkpoint $t_n$ and the current time $T$, has maximal discrepancy when we place a new checkpoint.}

Often, it will be useful to use a different notation for the checkpoint \LV{that is }removed in step $i$. Instead of  the global index $d$, one can also use the index $p: [k+1..\infty) \to [1..k]$ among the active checkpoints, i.e.,
\[p_i = d_i - |\{j\in [i-1] \,|\, d_j<d_i\}|.\]

We call an algorithm $A=(t,p)$ \emph{cyclic}, if the $p_i$ are periodic with some period $n$, i.e., $p_i=p_{i+n}$ for all $i$, and after $n$ steps $A$ has transformed the intervals to a scaled version of themselves, that is, $\mathcal{L}_{t_{k+jn}} = \gamma^j \mathcal{L}_{t_k}$ for some $\gamma>1$ and all $j\in \mathbb{N}$. We call $\gamma$ the \emph{scaling factor}. For a cyclic algorithm $A$, it suffices to fix the \emph{pattern} of removals $P=(p_{k+1},\ldots,p_{k+n})$ and the checkpoint positions $t_1,\ldots,t_k,t_{k+1},\ldots,t_{k+n}$. Since our discrepancy notion is invariant under scaling, we can assume without loss of generality that $t_k=1$ (and hence $t_{k+n}=\gamma$).

Since cyclic algorithms transform the starting position to a scaled copy of itself, it is easy to see that their discrepancy is given by the maximum over the discrepancies during one period, i.e., for cyclic algorithms $A$ with period $n$ we have
\[
	\Perf(A) = \max_{k < i \leq k+n} q(A,t_i).
\] 
This makes this class of algorithms easy to analyze.

\section{Introductory Example -- A Simple Bound for $k=3$}\label{sec:golden_ratio}
For the case of $k=3$ there is a very simple algorithm, \AlgSimple, with a discrepancy of $4/{\phi^2}\approx 1.53$, where $\phi = (\sqrt{5}+1)/2$ is the golden ratio. \LV{Because the algorithm is so simple, w}\SV{W}e use it to familiarize ourselves with the notation \LV{we introduced in }\SV{from }Sect.~\ref{sec:notation}.
The algorithm is cyclic with a pattern of length one. We prove the following theorem.

\begin{theorem} For $k=3$ there is a cyclic algorithm $\AlgSimple$ with period length one and 
\[
	\Perf(\AlgSimple)=\LV{\frac{4}{\phi^2}}\SV{4/\phi^2}.
\]
\end{theorem}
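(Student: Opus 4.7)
The plan is to exhibit an explicit cyclic algorithm of period one and compute its discrepancy. Normalize so that the last initial checkpoint satisfies $t_3 = 1$; let $\gamma > 1$ be the scaling factor, so $t_4 = \gamma$. Denote by $(a,b,c,0)$ the interval lengths at time $t_3$, where $a = t_1$, $b = t_2 - t_1$, $c = 1 - t_2$. The task reduces to choosing both the removal pattern $p_4 \in \{1,2,3\}$ and the positions $t_1, t_2$ so that after one step the new interval lengths coincide with $\gamma \cdot (a,b,c,0)$.

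First I would rule out $p_4 \in \{2, 3\}$: either choice leaves the leftmost interval $[0, t_1]$ untouched, which forces $\gamma = 1$. Hence the only viable pattern is $p_4 = 1$, i.e., always delete the leftmost active checkpoint. The cyclic equations $(a+b,\; c,\; \gamma - 1) = \gamma \cdot (a,\, b,\, c)$ together with $a + b + c = 1$ then determine the configuration uniquely in terms of $\gamma$, and a short computation yields $a = 1/\gamma^2$, $b = (\gamma-1)/\gamma^2$, $c = 1 - 1/\gamma$.

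It remains to choose $\gamma$ optimally. By Lemma~\ref{lem:steps_are_enough} combined with the cyclic-algorithm observation at the end of Sect.~\ref{sec:notation}, it suffices to evaluate $q(\AlgSimple, T)$ at $T = t_3$ (equivalently $T = t_4$); at this time the discrepancy equals $4 \max(a, b, c)$. Since $b$ never dominates both $a$ and $c$ for $\gamma > 1$, we simply minimize $\max\bigl(1/\gamma^2,\, 1 - 1/\gamma\bigr)$. The two expressions coincide precisely when $\gamma^2 = \gamma + 1$, i.e., at $\gamma = \phi$, yielding $\max(a,b,c) = 1/\phi^2$ and hence the claimed $\Perf(\AlgSimple) = 4/\phi^2$. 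The only mildly subtle step is discarding the alternative removal patterns; everything else is a short calculation leveraging the golden-ratio identity $\phi - 1 = 1/\phi$.
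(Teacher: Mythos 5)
Your proof is correct and follows essentially the same route as the paper: fix the pattern $(1)$, derive the cyclic positions $t_1 = 1/\gamma^2$, $t_2 = 1/\gamma$, observe that the middle interval never dominates, and balance $1/\gamma^2$ against $(\gamma-1)/\gamma$ to get $\gamma = \phi$. The only addition is your argument ruling out $p_4 \in \{2,3\}$, which is not needed for this existence statement and is slightly loose (it implicitly assumes $t_1 > 0$; if $t_1 = 0$ those patterns are cyclic but give discrepancy $\ge 2$), so it neither helps nor hurts the result.
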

\begin{proof}
We fix the pattern to be $P=(1)$, that is, algorithm \AlgSimple\ always removes the oldest checkpoint. For this simple pattern it is easy to calculate the discrepancy depending on the scaling factor $\gamma$. Since the intervals need to be a scaled copy of themselves after just one step and we can fix $t_3=1$, we know immediately that
\[
	t_1=\SV{\gamma^{-2}}\LV{\frac{1}{\gamma^2}},\ t_2=\SV{\gamma^{-1}}\LV{\frac{1}{\gamma}},\ t_3=1,\ t_4=\gamma,
\]
and hence the discrepancy is determined by
\LV{
\[
    4\cdot\max\left\{ \frac{t_1-0}{t_3}, \frac{t_2-t_1}{t_3}, \frac{t_3-t_2}{t_3} \right\} =
	4\cdot\max \left\{\frac{1}{\gamma^2}, \frac{\gamma-1}{\gamma^2}, \frac{\gamma-1}{\gamma}\right\}.
\]
} 
\SV{
\[
    4\cdot\max\left\{ t_1/t_3, (t_2-t_1)/t_3, (t_3-t_2)/t_3 \right\} =
	4\cdot\max \left\{\gamma^{-2}, (\gamma-1)/\gamma^2, (\gamma-1)/\gamma\right\}.
\]
}
\LV{
Since $\gamma>1$, the second term is always smaller than the third and can be ignored. As $1/\gamma^2$ is decreasing and $(\gamma-1)/\gamma$ is increasing, the maximum is minimal when they are equal. Simple calculation shows this to be the case at $\gamma = \phi$.}
\SV{A simple calculation shows this maximum to be minimal at $\gamma=\phi$.}

Hence for $k=3$ the algorithm with pattern $(1)$ and checkpoint positions $t_1=1/\phi^2$, $t_2=1/\phi$, $t_3= 1$, and $t_4=\phi$ has discrepancy $4/\phi^2 \approx 1.53$. ~
\end{proof}
\LV{
The experiments in Sect.~\ref{sec:linear_programming} indicate that for $k=3$ this is optimal among all cyclic algorithms with a period of length at most 6. }

\section{A Simple Upper Bound for Large $k$}\label{sec:karls_algorithm}
In this section we present an algorithm, \AlgKarl, with a discrepancy of roughly $1.59$ for large $k$. This improves upon the asymptotic bound of 2 from~\cite{AhlrothPS11}. Moreover, \AlgKarl\ is easily implemented for all $k$.

Like the algorithm \AlgSimple\ of the previous section, the algorithm \AlgKarl\ is cyclic. It has a simple pattern of length $k$. The pattern is just $(1,\ldots,k)$, that is, at the $i$-th step of a period \AlgKarl\ deletes the $i$-th active checkpoint. Overall, during one period \AlgKarl\ removes all checkpoints at times $t_i$ with odd index $i$, as shown in Fig.~\ref{fig:karls_algorithm_movement}. 

This removal pattern is identical to the one of \textsc{Powers-Of-Two} algorithm from~\cite{AhlrothPS11}. However, that algorithm starts with a uniform checkpoint distribution where removing any checkpoint doubles the maximum interval. This leads to an asymptotic discrepancy of two. In contrast, \AlgKarl\  places checkpoints on a polynomial. For $i\in [1,2k]$ we set $t_i=(i/k)^\alpha$, where $\alpha$ is a constant. In the analysis we optimize the choice of $\alpha$ and set $\alpha := 1.302$. \LV{For this algorithm w}\SV{W}e show the following theorem.
\begin{theorem} \label{thm:algkarl}
Algorithm \AlgKarl\ has a discrepancy of at most \SV{$1.586 + O(k^{-1})$.}
\LV{\[\Perf(\AlgKarl)\leq 1.586 + O(k^{-1}).\]}
\end{theorem}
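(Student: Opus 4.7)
The plan is to exploit the cyclic structure and reduce, via Lemma~\ref{lem:only_new_intervals}, to bounding only the two intervals created at each step within one period, then upper-bound these by their continuous analogues and optimize the exponent~$\alpha$.

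First I would establish the dynamics of one period by induction on $j$: before step $k+j$ the active set is $\{t_2,t_4,\ldots,t_{2j-2}\}\cup\{t_{2j-1},t_{2j},\ldots,t_{k+j-1}\}$, so the $j$-th active checkpoint is $t_{2j-1}$, which is removed, while $t_{k+j}$ is inserted. After the full period ($j=k$) the active set is $\{t_2,t_4,\ldots,t_{2k}\}$, and since $t_{2i}=(2i/k)^\alpha=2^\alpha t_i$, this is indeed a scaled copy of the initial configuration with scaling factor $\gamma=2^\alpha$, confirming that \AlgKarl\ is cyclic. Consequently, $\Perf(\AlgKarl)=\max_{k\le i\le k+k}q(\AlgKarl,t_i)$ by the formula for cyclic algorithms, and Lemma~\ref{lem:only_new_intervals} further reduces the task (for $i>k$) to bounding the two newly created intervals, which by the preceding description have lengths $\ell_1=t_{2j}-t_{2j-2}$ (setting $t_0:=0$) and $\ell_2=t_{k+j}-t_{k+j-1}$. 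The initial time $i=k$ is handled separately by the easy estimate $q(\AlgKarl,t_k)\le (k+1)(1-(1-1/k)^\alpha)\le \alpha(1+1/k)$.

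Next, I would substitute $t_i=(i/k)^\alpha$ and let $x:=j/k\in(0,1]$. Using convexity of $u\mapsto u^\alpha$ for $\alpha\ge 1$ (so that the derivative bounds the secant from above on each piece) one obtains
\[
(k+1)\ell_1/t_{k+j}\;\le\;(1+1/k)\cdot 2^\alpha\alpha\cdot\frac{x^{\alpha-1}}{(1+x)^\alpha},
\]
and from $1-(1-y)^\alpha\le\alpha y$ (valid for $\alpha\ge 1$, $y\in(0,1)$) one gets
\[
(k+1)\ell_2/t_{k+j}\;\le\;(1+1/k)\cdot\frac{\alpha}{1+x}.
\]
A short calculus exercise shows that $g(x):=x^{\alpha-1}/(1+x)^\alpha$ is unimodal on $(0,1]$ with maximum at $x^\star=\alpha-1$, yielding the value $2^\alpha\alpha\cdot g(x^\star)=2\bigl(2(\alpha-1)/\alpha\bigr)^{\alpha-1}$, while the second expression $\alpha/(1+x)$ is decreasing with supremum $\alpha$ as $x\to 0^+$.

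Finally I would plug in $\alpha=1.302$: the first expression evaluates to approximately $1.5856$, which exceeds $\alpha$, so the first bound is binding. Combining the two estimates and absorbing the $(1+1/k)$ factor, this yields $\Perf(\AlgKarl)\le 1.586+O(k^{-1})$. The main obstacle is the bookkeeping: one has to verify the inductive description of the active set carefully (in particular the edge cases $j=1$ and $j=k$), check that the $O(1/k)$ slack from replacing finite differences by derivatives does not overwhelm the margin between $1.5856$ and $1.586$, and pick $\alpha$ optimally — $\alpha=1.302$ is slightly sub-optimal but convenient for arithmetic.
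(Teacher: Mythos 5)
Your proposal is correct and follows essentially the same route as the paper's own proof: reduce to the two new intervals per step via Lemma~\ref{lem:only_new_intervals}, bound the finite differences of $i\mapsto i^\alpha$ by derivatives, optimize the resulting expression over the step index (your $x=j/k$ is the paper's $i=\alpha k$ in disguise, giving the same maximizer $x^\star=\alpha-1$ and the same value $2^\alpha(1-1/\alpha)^{\alpha-1}$), and plug in $\alpha=1.302$. The only cosmetic differences are that you make the cyclic bookkeeping of the active set explicit and normalize by $k$; the substance is identical.
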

\LV{%
Experiments show that the discrepancy of algorithm \AlgKarl\ is close to the bound of $1.586$ even for moderate sizes of $k$. Comparisons using the optimization method from Sect.~\ref{sec:linear_programming} indicate that for the pattern $(1,\ldots,k-1)$ of algorithm \AlgKarl, different checkpoint placements can yield only improvements of about 4.5\% for large $k$. Experimental results are summarized in Fig.~\ref{fig:karls_algorithm_performance}.
}
\SV{%
  The proof of the above theorem is a straightforward calculation of the lengths of the newly created intervals at each step, similar to the previous section, together with some analytic estimations of the resulting discrepancies.
}
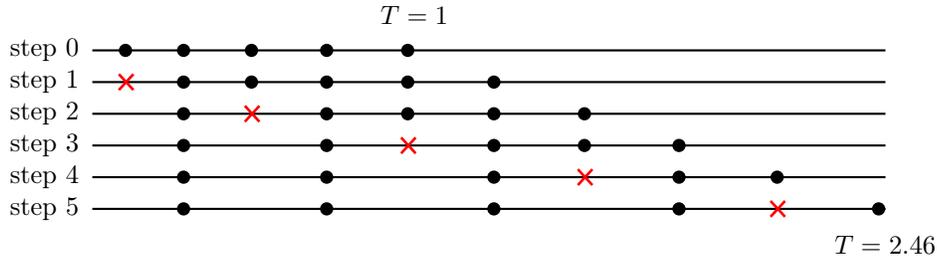
\begin{figure}
  \centering
  \definecolor{gray}{RGB}{178,178,178}
    \begin{tikzpicture}[y=1pt, x=1pt,yscale=-1, inner sep=0pt, outer sep=0pt]
    \def\linespread{\baselineskip}
    \foreach \i in {0,1,2,...,5} {
      \path[shift={(0,0)}, draw=black, line join=miter, line cap=butt, line width=0.8pt]
        (0,\i * \linespread) node[left=5] {step \i} -- (300, \i * \linespread) ;
    }
    \foreach \x in {15.031959126760265, 37.01302500108005, 62.7008195354366, 91.13675790215085, 121.8082186} {
      \path[fill=black]
              (\x,0)arc(0.000:180.000:2.500)arc(-180.000:0.000:2.500) -- cycle;
    }
    \draw (121.8082186,0) node [above=10] {$T=1$};

    \foreach \x in {37.01302500108005, 62.7008195354366, 91.13675790215085, 121.8082186, 154.38752736640112} {
      \path[fill=black]
              (\x,\linespread)arc(0.000:180.000:2.500)arc(-180.000:0.000:2.500) -- cycle;
    }
    \foreach \x in {37.01302500108005, 91.13675790215085, 121.8082186, 154.38752736640112, 188.64401807637188} {
      \path[fill=black]
              (\x,2* \linespread)arc(0.000:180.000:2.500)arc(-180.000:0.000:2.500) -- cycle;
    }
    \foreach \x in {37.01302500108005, 91.13675790215085, 154.38752736640112, 188.64401807637188, 224.40502068320237} {
      \path[fill=black]
              (\x,3* \linespread)arc(0.000:180.000:2.500)arc(-180.000:0.000:2.500) -- cycle;
    }
    \foreach \x in {37.01302500108005, 91.13675790215085, 154.38752736640112, 224.40502068320237, 261.5356213560098} {
      \path[fill=black]
              (\x,4 * \linespread)arc(0.000:180.000:2.500)arc(-180.000:0.000:2.500) -- cycle;
    }
    \foreach \x in {37.01302500108005, 91.13675790215085, 154.38752736640112, 224.40502068320237, 299.92701565777264} {
      \path[fill=black]
              (\x,5* \linespread)arc(0.000:180.000:2.500)arc(-180.000:0.000:2.500) -- cycle;
    }
    \draw (299.927, 5*\linespread) node [below=10pt] {$T=2.46$};

    \foreach \i/\x in {1 / 15.0319591268 , 2 / 62.7008195354 , 3 / 121.8082186 , 4 / 188.644018076 , 5 / 261.535621356 } {
      \path[draw=red, line width = 1pt] 
        (\x - 5, \i * \linespread - 3) -- (\x + 0.5 , \i  * \linespread + 3);
      \path[draw=red, line width = 1pt]
        (\x - 5, \i * \linespread + 3) -- (\x + 0.5, \i  * \linespread- 3);
    }
  \end{tikzpicture}
\caption{One period of the algorithm \AlgKarl\ from Sect.~\ref{sec:karls_algorithm} for $k=5$. After one period all intervals are scaled by the same factor.}
\label{fig:karls_algorithm_movement}
\end{figure}
\LV{
  \begin{proof}
  As algorithm \AlgKarl\ is cyclic, we can again compute the discrepancy from the $2k$ checkpoint positions and the pattern,
  \begin{align*}
    \Perf(\AlgKarl) = \max_{k < i \leq 2k} (k+1) \bar \ell_{t_i} / t_i,
  \end{align*} 
  where $\bar \ell_{t_i}$ is the length of the longest interval at time $t_i$.
  By Lemma~\ref{lem:only_new_intervals} it suffices to consider newly created intervals at times $t_{k+1},\ldots, t_{2k}$. Note that at time $t_i$ we create the intervals $[t_{i-1},t_i]$ (from insertion of a checkpoint at $t_i$) and $[t_{2(i-k)-2},t_{2(i-k)}]$ (from deletion of the checkpoint at $t_{2(i-k)-1}$). The discrepancy of the new interval by insertion is, for $k < i \le 2k$, 
  \begin{align*}
    (k+1) \frac{t_i - t_{i-1}}{t_i} = (k+1) \frac{i^\alpha - (i-1)^\alpha}{i^\alpha} 
    \le (k+1) \frac{(k+1)^\alpha - k^\alpha}{(k+1)^\alpha}.
  \end{align*}
  Using $(x+1)^c - x^c \le c (x+1)^{c-1}$ for any $x \ge 0$ and $c \ge 1$, this simplifies to
  \begin{align*}
    \le (k+1) \frac{ \alpha (k+1)^{\alpha-1} }{ (k+1)^\alpha } = \alpha,
  \end{align*}
  for any constant $\alpha \ge 1$.

  For the new interval from deleting the checkpoint at $t_{2(i-k)-1}$ we get a discrepancy of 
  \begin{align*}
    (k+1) \frac{t_{2(i-k)} - t_{2(i-k)-2}}{t_i} &= (k+1) \frac{(2(i-k))^\alpha - (2(i-k)-2)^\alpha}{i^\alpha}  \\
    &\le (k+1) 2^\alpha \frac{ \alpha (i-k)^{\alpha-1} }{ i^\alpha },
  \end{align*}
  where we used again $(x+1)^c - x^c \le c (x+1)^c$. An easy computation shows that $(i-k)^{\alpha-1} / i^\alpha$ is maximized at $i = \alpha k$ over $k < i \le 2k$. Hence, we can upper bound this discrepancy by
  \begin{align*}
    \le \Big(1+ \frac 1k \Big) 2^\alpha \frac{ \alpha (\alpha - 1)^{\alpha - 1} }{ \alpha^\alpha }
    = 2^\alpha \Big(1- \frac 1\alpha \Big)^{\alpha-1} + O(k^{-1}).
  \end{align*}
  We optimize the latter term numerically and obtain for $\alpha = 1.302$ an upper bound of
  \begin{align*}
    1.586 + O(k^{-1}).
  \end{align*}
  Note that this bound is larger than the bound $\alpha = 1.302$ from the new intervals from insertion. Hence, overall we get the desired upper bound.~
  \end{proof}
}

\section{An Improved Upper Bound for Large $k$}\label{sec:jakubs_algorithm}
In this section we present the algorithm \AlgJakub\ that yields a discrepancy of roughly $\ln(4) \approx 1.39$ for large $k$. Compared to the algorithm \AlgKarl\ from the last section, \AlgJakub\ has a considerably better discrepancy at the price of a more involved analysis, and it only works for $k$ being a power of two. 

\begin{theorem} \label{thm:algjacub}
  For $k\ge 8$ being any power of 2, the algorithm \AlgJakub\ has discrepancy
\LV{\[    
  \Perf(\AlgJakub) \le \ln(4) + \frac{0.05}{\lg(k/4)} + O\Big(\frac1{k}\Big).
  \]
}\SV{\[
  \Perf(\AlgJakub) \le \ln(4) + 0.05/\lg(k/4) + O\left(k^{-1}\right)
\]}
\end{theorem}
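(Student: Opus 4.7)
The plan follows the template used for \AlgKarl: exhibit \AlgJakub\ as a cyclic algorithm, apply Lemma~\ref{lem:only_new_intervals} so that only the two intervals newly created at each step of one period need to be considered, and then bound each such discrepancy by $\ln(4) + 0.05/\lg(k/4) + O(1/k)$. Since the target bound is $\ln(4) = 2\ln 2$, the natural placement of the $k$ active checkpoints is essentially geometric with total spread of a factor~$4$: for some normalization $t_i \approx 4^{(i-k)/k}$ for $i=1,\ldots,k$, so that neighboring ratios are $4^{1/k}\approx 1+\ln(4)/k$. This already forces the discrepancy of each insertion-created interval to converge to $\ln(4)$ from above, because $(k+1)(1-4^{-1/k}) = \ln(4) + O(1/k)$. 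The restriction to $k$ being a power of two then allows a recursive pattern of deletions: the period decomposes into roughly $\lg(k/4)+O(1)$ phases, where in the $j$-th phase one deletes every second checkpoint among those that survived the previous phase, so that the approximate geometric spacing is preserved throughout.

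I would proceed in three steps. First, I would describe \AlgJakub\ precisely: specify the cyclic pattern $P$ (in terms of the binary level structure) together with the initial checkpoint positions, and verify that after one full pattern the configuration is a scaled copy of itself, so that by cyclicity we have $\Perf(\AlgJakub) = \max_{k<i\le k+n} q(\AlgJakub,t_i)$. Second, I would bound the insertion discrepancies: for each $i$, compute $(k+1)(t_i-t_{i-1})/t_i$ and observe that it is maximized in a specific phase and bounded by $\ln(4)+O(1/k)$. Third, I would bound the deletion discrepancies: when the checkpoint at $t_{d_i}$ is removed, by the binary structure its two active neighbors $t_\ell$ and $t_r$ lie at a combined multiplicative distance of about $4^{2/k'}$, where $k'$ is the number of checkpoints still present in the affected phase, so $(k+1)(t_r-t_\ell)/t_i$ is again of order $\ln(4)$, up to an additive correction depending on the phase.

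The main obstacle is the third step. Unlike the uniform pattern $(1,\ldots,k)$ of \AlgKarl, the binary deletion pattern produces merged intervals whose relative size varies between phases: in the interior phases the analysis is clean because the just-inserted checkpoints sit at a near-uniform geometric spacing matched to the current time scale, but the first and last phases of a period carry a larger additive error where the density of active checkpoints deviates from the ideal. The delicate part is to choose the boundary checkpoint positions (and the exact transition between consecutive phases) so as to distribute this phase-wise slack across all $\lg(k/4)$ phases roughly equally. If this is done, one obtains a worst-case discrepancy bounded by $\ln(4)+\varepsilon$ per phase, with the $\varepsilon$ budget divided over $\lg(k/4)$ phases yielding the $0.05/\lg(k/4)$ term, while the residual $O(1/k)$ error collects both the discretization $4^{1/k} = 1 + \ln(4)/k + O(1/k^2)$ and the rounding at the period boundaries.
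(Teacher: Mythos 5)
Your high-level plan matches the paper: define \AlgJakub\ as a cyclic algorithm with a recursive binary deletion pattern, invoke Lemma~\ref{lem:only_new_intervals} to restrict attention to the two newly created intervals at each step, bound the insertion intervals by roughly $\ln(4)+O(1/k)$, and then handle the deletion intervals. The insertion bound you sketch would indeed go through. However, two essential ingredients are wrong or absent, and without them the argument does not close.

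First, the checkpoint positions. You propose $t_i \approx 4^{(i-k)/k}$ for all $i$, i.e., a geometric sequence spanning a multiplicative factor of~$4$. This cannot be made to work. (a)~If $t_1 \approx 1/4$, then deleting $t_1$ creates the interval $[0,t_2]$ of length $\approx 1/4$, yielding discrepancy $\approx (k+1)/4$, unbounded in $k$. (b)~After deleting all odd checkpoints, the survivors $t_2,t_4,\ldots,t_k$ would form a geometric sequence with ratio $4^{2/k}$, which is \emph{not} a scaled copy of the original sequence (ratio $4^{1/k}$) --- so cyclicity fails. The paper's actual construction is hierarchical: only $t_{k/2+1},\ldots,t_k$ are geometric with ratio $\alpha^{2/k}$ (with $\alpha \approx 2$), and for $i \le k/2$ one sets $t_i = \alpha^{-1} t_{2i}$ recursively, so that $t_1 = \alpha^{-\lg k} \approx 1/k$. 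The positions thus span a factor $\approx k$ rather than~$4$, the checkpoints thin out geometrically toward~$0$, and it is precisely this structure that makes the deletion of $t_1$ harmless (discrepancy $\le 1/k$) and makes the configuration after one period a scaled copy of itself.

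Second, the deletion analysis. Your phase-wise heuristic (``combined multiplicative distance of about $4^{2/k'}$'', ``distribute phase-wise slack across all $\lg(k/4)$ phases roughly equally'') does not give a bound, nor does it explain where the $0.05/\lg(k/4)$ term comes from. In the paper the deletion discrepancy at step $i$ is computed exactly as $q_i = \alpha^{-1-h}\cdot 2\sinh\bigl(\ln(\alpha^2)\,2^h/k\bigr)$, where $h$ is the $2$-adic valuation of $i+k/2$ (so $h \in \{0,1,\ldots,\lg(k/4)\}$). One then shows that $x \mapsto x^{-A}\sinh(Bx)$ is convex, hence the maximum over $h$ occurs at one of the two endpoints $h=0$ or $h=\lg(k/4)$, and \emph{balances} these two cases by choosing $\alpha = 2^{1 + c/\lg(k/4)}$ with $c = \lg(\sqrt{2}/\ln 4) \approx 0.029$. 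The constant $0.05$ in the theorem is an artifact of this balancing, not of averaging slack across phases. Without the convexity reduction and the precise choice of $\alpha$, there is no way to obtain the stated error term.
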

Here and in the remainder of this paper, let `$\lg$' denote the binary and `$\ln$' the natural logarithm. Note that the term $O(1/k)$ quickly tends to 0, whereas the $\Theta(1/\lg(k/4))$ term is small due to the constant $0.05$. Hence, this discrepancy is close to $\ln(4)$ already for moderate $k$. Also note that $\ln(4)$ is by less than $0.1$ larger than our lower bound from Sect.~\ref{sec:lower_bound}, leaving room for less than a $6\%$ improvement over\LV{ the upper bound for} algorithm \AlgJakub\ for large $k$. 
\LV{We verified experimentally that algorithm \AlgJakub\ yields very good bounds already for relatively small $k$. The results are summarized in Fig.~\ref{fig:jakubs_algorithm_performance}.}

\subsection{The Algorithm \AlgJakub} 

The initial checkpoints $t_1,\ldots,t_k$ satisfy the equation
\begin{equation} \label{eq:tialpha}
  t_i = \alpha t_{i/2} 
\end{equation}
for each even $1 \le i \le k$ and some $\alpha = \alpha(k) \ge 2$. Precisely, we set
\[
  \alpha := 2^{1 + \frac{\lg(\sqrt{2}/\ln 4)}{\lg(k/4)} } \approx 2^{1+ \frac{0.029}{\lg(k/4)}}.
\]
However, the usefulness of this expression becomes clear only in the analysis\LV{ of the algorithm}.

During one period we delete all odd checkpoints $t_1,t_3,\ldots,t_{k-1}$ and insert\LV{ the new checkpoints}
\begin{equation} \label{eq:newti}
  t_{k+i} := \alpha t_{k/2+i},
\end{equation}
for $1 \le i \le k/2$. Then after one period we end up with the checkpoints
\[
\begin{array}{l*{5}{l@{,\,}}@{\quad}*{3}{l@{,\ }}ll}
  &(t_2 & t_4 & \ldots\, & t_{k-2} & t_k & t_{k+1} & t_{k+2} &\ldots\, & t_{k+k/2})&  \\
  = \alpha \cdot&(t_1 & t_2 & \ldots\, & t_{k/2-1} &t_{k/2} & t_{k/2+1} & t_{k/2+2} & \ldots\, &t_{k/2+k/2})& = \alpha (t_1,t_2,\ldots,t_k),
\end{array}
\]
which proves cyclicity. Note that~\eqref{eq:tialpha} and~\eqref{eq:newti} allow us to compute all $t_i$ from the values $t_{k/2+1},\ldots,t_k$, however, we still have some freedom to choose the latter values. Without loss of generality we can set $t_k := 1$, then $t_{k/2} = \alpha^{-1}$. In between these two values, we interpolate $\lg t_i$ linearly, i.e., we set for $i \in (k/2,k]$
\begin{equation} \label{eq:tiinterpol}
  t_i := \alpha^{2i/k - 2},
\end{equation}
completing the definition of the $t_i$. Note that\LV{ this equation}\SV{\eqref{eq:tiinterpol}} also works for $i=k$ and $i=k/2$.

\LV{There is one more freedom we have with this algorithm, namely in which order we delete all odd checkpoints during one period, i.e., we need to fix the pattern of removals.}
In iteration $1 \le i \le k/2$ we insert the checkpoint $t_{k+i}$ and remove the checkpoint $t_{d(i+k)}$, defined as follows. For $m \in \N = \N_{\ge 1}$ let $2^{e(m)}$ be the largest power of 2 that divides $m$. We define $S \colon \N \to \N, S(m) := m / 2^{e(m)}$. Note that $S(m)$ is an odd integer. Using this definition, we set
\LV{\begin{equation} \label{eq:defd}
  d(k+i) := S\Big(i+ \frac k2 \Big),
\end{equation}}
\SV{\begin{equation}\label{eq:defd}
  d(k+i) := S(i+k/2)
\end{equation}}
finishing the definition of the algorithm \AlgJakub. If we write this down as a pattern, then we have $p_i = 1 + k/(2^{1+e(i)})$ for $1 \le i < k/2$ and $p_{k/2} = 1$.
For intuition as to the behavior of this pattern, see the example in Fig.~\ref{fig:jakubs_algorithm_movement}.
\LV{
The following lemma implies that the deletion behavior of \AlgJakub\ is indeed well-defined, meaning that during one period we delete all odd checkpoints $t_1,t_3,\ldots,t_{k-1}$ (and no point is deleted twice).
\begin{lemma}
  The function $S$ induces a bijection between $\{k/2 < i \le k\}$ and $\{1 \le i \le k \mid i \text{ is odd} \}$.
\end{lemma}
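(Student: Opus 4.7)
The plan is to prove the bijection by first verifying the two sets have the same cardinality and then showing $S$ is injective on $\{k/2 < i \le k\}$. The domain has $k/2$ elements, and the set of odd integers in $[1,k]$ also has $k/2$ elements (since $k$ is a power of $2$, hence even), so injectivity will imply the claim. We first need to note that $S(i)$ is odd and positive, and that $S(i) \le i \le k$, so $S$ indeed maps into the target set.

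For injectivity, I would argue as follows. Suppose $i, j \in (k/2, k]$ satisfy $S(i) = S(j) = s$. By definition there exist nonnegative integers $a, b$ with $i = 2^a s$ and $j = 2^b s$. Without loss of generality $a \le b$, so $j = 2^{b-a} i \ge i$. If $b > a$, then $j \ge 2i > k$, contradicting $j \le k$. Hence $a = b$ and $i = j$.

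For completeness (and to give intuition rather than relying purely on the counting argument), I would also sketch surjectivity directly. Given odd $s$ with $1 \le s \le k$, one seeks $a \ge 0$ with $k/2 < 2^a s \le k$. Choose $a$ to be the largest nonnegative integer with $2^a s \le k$; such an $a$ exists since $s \le k$. Then by maximality $2^{a+1} s > k$, i.e., $2^a s > k/2$, so $i := 2^a s$ lies in $(k/2, k]$ and clearly $S(i) = s$.

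The argument is entirely elementary; no step is a real obstacle. The only thing to be careful about is that the reasoning leverages that $k$ is a power of $2$ exactly once (to get $|(k/2, k] \cap \mathbb{Z}| = k/2$ and that $k/2$ is an integer), but nowhere else — the injectivity argument uses only that $i, j \le k$.
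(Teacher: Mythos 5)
Your proof is correct and uses essentially the same approach as the paper: both rely on the observation that $S(m)\le m$ and $S(m)$ is odd (so $S$ maps into the target set), together with the equal-cardinality argument, so that one direction of bijectivity suffices. The only difference is which direction is established explicitly — you argue injectivity (if $S(i)=S(j)$ with $i,j\in(k/2,k]$ then one would be at least twice the other, forcing it past $k$), whereas the paper constructs the inverse map $x\mapsto x2^y$ and argues $y$ is unique because $(k/2,k]$ lies between consecutive powers of two; your surjectivity sketch is, in fact, the paper's argument.
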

\begin{proof}
  Let $A := \{k/2 < i \le k\}$ and $B := \{1 \le i \le k \mid i \text{ is odd} \}$.
  Since $S(m) \le m$ and $S(m)$ is odd for all $m \in \N$, we have $S(A) \subseteq B$. Moreover, $A$ and $B$ are of the same size. We present an inverse function to finish the proof. Let $x \in B$. Note that there is a unique number $y \in \N$ such that $x 2^y \in A$, since $A$ is a range between two consecutive powers of 2 and $x \le k$. Setting $S^{-1}(x) = x 2^y$ we have found the inverse. ~
\end{proof}
}\SV{It is not hard to see that with $d$ as defined above, we indeed delete all odd checkpoints $t_1,t_3,\ldots,t_{k-1}$ during one period,
in the full version of this paper we include a formal proof of this.}
\begin{figure}
  \centering
  \definecolor{gray}{RGB}{178,178,178}
    \begin{tikzpicture}[y=1pt, x=1pt,yscale=-1, inner sep=0pt, outer sep=0pt]
    \def\linespread{\baselineskip}
    \foreach \i in {0,1,2,...,8} {
      \path[shift={(0,0)}, draw=black, line join=miter, line cap=butt, line width=0.8pt]
        (0,\i * \linespread) node[left=5] {step \i} -- (303, \i * \linespread) ;
    }
    \foreach \x in {9.086718622889968, 18.28737691914821, 25.94320242291756, 36.80406189100548, 43.83609267359075, 52.211710397032434, 62.18762978904628, 74.06961521412411, 80.83661709068994, 88.22185242594281, 96.28180304394778, 105.07811094961241, 114.67804976293608, 125.155039223504, 136.58920670012435, 149.068} {
      \path[fill=black]
              (\x,0)arc(0.000:180.000:2.500)arc(-180.000:0.000:2.500) -- cycle;
    }
    \draw (149.068,0) node [above=10] {$T=1$};

    \foreach \x in {9.086718622889968, 18.28737691914821, 25.94320242291756, 36.80406189100548, 43.83609267359075, 52.211710397032434, 62.18762978904628, 74.06961521412411, 88.22185242594281, 96.28180304394778, 105.07811094961241, 114.67804976293608, 125.155039223504, 136.58920670012435, 149.068, 162.6868561641611} {
      \path[fill=black]
              (\x,\linespread )arc(0.000:180.000:2.500)arc(-180.000:0.000:2.500) -- cycle;
    }
    \foreach \x in {9.086718622889968, 18.28737691914821, 25.94320242291756, 36.80406189100548, 52.211710397032434, 62.18762978904628, 74.06961521412411, 88.22185242594281, 96.28180304394778, 105.07811094961241, 114.67804976293608, 125.155039223504, 136.58920670012435, 149.068, 162.6868561641611, 177.549931364065} {
      \path[fill=black]
              (\x,2*\linespread)arc(0.000:180.000:2.500)arc(-180.000:0.000:2.500) -- cycle;
    }
    \foreach \x in {9.086718622889968, 18.28737691914821, 25.94320242291756, 36.80406189100548, 52.211710397032434, 62.18762978904628, 74.06961521412411, 88.22185242594281, 105.07811094961241, 114.67804976293608, 125.155039223504, 136.58920670012435, 149.068, 162.6868561641611, 177.549931364065, 193.7708974815676} {
      \path[fill=black]
              (\x,3*\linespread)arc(0.000:180.000:2.500)arc(-180.000:0.000:2.500) -- cycle;
    }
    \foreach \x in {9.086718622889968, 18.28737691914821, 36.80406189100548, 52.211710397032434, 62.18762978904628, 74.06961521412411, 88.22185242594281, 105.07811094961241, 114.67804976293608, 125.155039223504, 136.58920670012435, 149.068, 162.6868561641611, 177.549931364065, 193.7708974815676, 211.47381146446045} {
      \path[fill=black]
              (\x,4*\linespread)arc(0.000:180.000:2.500)arc(-180.000:0.000:2.500) -- cycle;
    }
    \foreach \x in {9.086718622889968, 18.28737691914821, 36.80406189100548, 52.211710397032434, 62.18762978904628, 74.06961521412411, 88.22185242594281, 105.07811094961241, 125.155039223504, 136.58920670012435, 149.068, 162.6868561641611, 177.549931364065, 193.7708974815676, 211.47381146446045, 230.79406410635147} {
      \path[fill=black]
              (\x,5*\linespread)arc(0.000:180.000:2.500)arc(-180.000:0.000:2.500) -- cycle;
    }
    \foreach \x in {9.086718622889968, 18.28737691914821, 36.80406189100548, 52.211710397032434, 74.06961521412411, 88.22185242594281, 105.07811094961241, 125.155039223504, 136.58920670012435, 149.068, 162.6868561641611, 177.549931364065, 193.7708974815676, 211.47381146446045, 230.79406410635147, 251.87941550709863} {
      \path[fill=black]
              (\x,6*\linespread)arc(0.000:180.000:2.500)arc(-180.000:0.000:2.500) -- cycle;
    }
    \foreach \x in {9.086718622889968, 18.28737691914821, 36.80406189100548, 52.211710397032434, 74.06961521412411, 88.22185242594281, 105.07811094961241, 125.155039223504, 149.068, 162.6868561641611, 177.549931364065, 193.7708974815676, 211.47381146446045, 230.79406410635147, 251.87941550709863, 274.8911251329348} {
      \path[fill=black]
              (\x,7*\linespread)arc(0.000:180.000:2.500)arc(-180.000:0.000:2.500) -- cycle;
    }
    \foreach \x in {18.28737691914821, 36.80406189100548, 52.211710397032434, 74.06961521412411, 88.22185242594281, 105.07811094961241, 125.155039223504, 149.068, 162.6868561641611, 177.549931364065, 193.7708974815676, 211.47381146446045, 230.79406410635147, 251.87941550709863, 274.8911251329348, 300.0051851189134} {
      \path[fill=black]
              (\x,8*\linespread)arc(0.000:180.000:2.500)arc(-180.000:0.000:2.500) -- cycle;
    }

    \draw (300.00518511, 8*\linespread) node [below=10pt] {$T=2.012$};

    \foreach \i/\x in {1/80.8366170907, 2/43.8360926736, 3/96.2818030439, 4/25.9432024229, 5/114.678049763, 6/62.187629789, 7/136.5892067, 8/9.08671862289} {
      \path[draw=red, line width = 1pt] 
        (\x - 5, \i * \linespread - 3) -- (\x + 0.5 , \i  * \linespread + 3);
      \path[draw=red, line width = 1pt]
        (\x - 5, \i * \linespread + 3) -- (\x + 0.5, \i  * \linespread - 3);
    }
    \foreach \x in {9.086718622889968, 18.28737691914821, 36.80406189100548, 74.06961521412411, 149.068, 300.0051851189134} {
      \path[draw=black, dotted, line width = 0.8pt]
        (\x -2.5, -7) -- (\x -2.5, 8*\linespread + 9);
    }
%
  \end{tikzpicture}
\caption{One period of the algorithm \AlgJakub\ for $k=16$. Note that, recursively, checkpoints are removed twice as often from the right half of the initial setting \LV{(at steps $i$ where $i \mod 2=1$) }as from the second quarter. }
\label{fig:jakubs_algorithm_movement}
\end{figure}
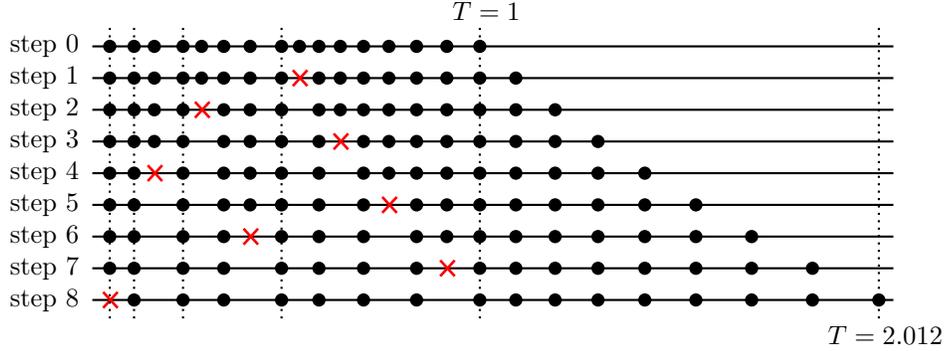

\subsection{Discrepancy Analysis} 
We now bound the largest discrepancy encountered during one period, i.e.,
\[
  \Perf(\AlgJakub) = \max_{1 \le i \le k/2} q(\AlgJakub,t_{i+k}) = (k+1) \max_{1 \le i \le k/2} \overline{\ell}_{t_{i+k}} / t_{i+k}.
\]
\SV{

  By Lemma~\ref{lem:only_new_intervals}, we only have to consider intervals newly created by insertion and deletion at any step. We do this exemplarily for the intervals from insertion.
}\LV{

  We first compute the maximum and later multiply with the factor $k+1$.
  By Lemma~\ref{lem:only_new_intervals}, we only have to consider intervals newly created by insertion and deletion at any step.

}
\paragraph{Intervals from Insertion:}
We \LV{first }compute the discrepancy of the interval newly added at time $t_{i+k}$, $1 \le i \le k/2$. Its length is $t_{i+k} - t_{i+k-1}$, so its discrepancy \LV{(without the factor $k+1$) }is
  \begin{align*}
    \SV{(k+1)} \frac{t_{i+k} - t_{i+k-1}}{t_{i+k}}
    &= \SV{(k+1) \left(} 1 - \frac{t_{i+k-1}}{t_{i+k}} \SV{\right)}  \\
    &= \SV{(k+1) \left(} 1 - \frac{ t_{i+k/2-1} }{ t_{i+k/2} } \SV{\right)} \LV{\\}
    \LV{&}\stackrel{\eqref{eq:tiinterpol}}{=} \SV{(k+1) \left(} 1 - \alpha^{-2/k} \SV{\right)},
  \end{align*}
  where the second equality holds because of \eqref{eq:newti} if $i>1$ or \eqref{eq:tialpha} if $i=1$.

  Using $e^x \ge 1+x$ for $x \in \mathbb{R}$ yields a bound on the discrepancy of 
  \[
    \SV{(k+1) }\frac{t_{i+k} - t_{i+k-1}}{t_{i+k}} \le \SV{(k+1)} \ln(\alpha) \frac{2}{k} = \ln(\alpha^2)\SV{ + O(k^{-1})}.
  \]
 \SV{ Since we choose $\alpha \approx 2$, we obtain a discrepancy of roughly $\ln(4)$, and the error term can easily be seen to be bounded by $0.05/\ln(k/4) + O(k^{-1})$. }

\LV{

\paragraph{Deleting $t_1$:} 
We show similar bounds for the intervals we get from deleting an old checkpoint. We first analyze the deletion of $t_1$---this case is different from the general one, since $t_1$ has no predecessor. Note that $t_1$ is deleted at time $t_{3k/2}$. The deletion of $t_1$ creates the interval $[0,t_2]$. This interval has discrepancy
\begin{align*}
  \frac{t_2}{t_{3k/2}} 
  \stackrel{\eqref{eq:newti},\eqref{eq:tialpha}}{=} \frac{\alpha t_1}{\alpha t_k}
  \stackrel{\eqref{eq:tialpha}}{=} \alpha^{-\lg k} \le 1/k,
\end{align*}
since we choose $\alpha \ge 2$. Hence, this discrepancy is dominated by the one we get from newly inserted intervals.

\paragraph{Other Intervals from Deletion:} 
It remains to analyze the discrepancy of the intervals we get from deletion in the general case, i.e., at some time $t_{i+k}$, $1 \le i < k/2$. At this time we delete checkpoint $d(i+k)$, so we create the interval $[t_{d(i+k)-1},t_{d(i+k)+1}]$ of discrepancy
\[
  q_i := \frac{ t_{d(i+k)+1} - t_{d(i+k)-1} }{ t_{i+k} }
  \stackrel{\eqref{eq:newti},\eqref{eq:defd}}{=} \frac{ t_{S(i+k/2)+1} - t_{S(i+k/2)-1} }{ \alpha t_{i+k/2} }.
\]
Let $h := e(i+k/2)$, so that $2^h$ is the largest power of 2 dividing $i+k/2$, and $2^h \,S(i+k/2) = i+k/2$. Then $t_{S(i+k/2)+1} = \alpha^{-h} t_{i+k/2+2^h}$ by \eqref{eq:tialpha}, and a similar statement holds for $t_{S(i+k/2)-1}$, yielding
\[
  q_i = \alpha^{-1-h} \frac{ t_{i+k/2+2^h} - t_{i+k/2-2^h} }{ t_{i+k/2} }.
\]
Using \eqref{eq:tiinterpol} we get $t_{i+k/2} = \alpha^{2i/k - 1}$. Comparing this with the respective terms for $t_{i+k/2+2^h}$ and $t_{i+k/2-2^h}$ yields
\begin{align*}
  q_i &= \alpha^{-1-h} \left(\alpha^{2^{h+1}/k} - \alpha^{-2^{h+1}/k}\right) \\
  &= \alpha^{-1-h} \cdot 2 \sinh \left( \ln \left(\alpha^2\right) 2^h / k \right).
\end{align*}
By elementary means one can show that the function $f(x) = x^{-A} \sinh(B x)$, $A \ge 1, B > 0$, is convex on $\mathbb{R}_{\ge 0}$. Since convex functions have their maxima at the boundaries of their domain, and since by above equation $q_i$ can be expressed using $f(2^h)$ (for $A = \lg \alpha$ and $B = \ln(\alpha^2)/k$), we see that $q_i$ is maximal at (one of) the boundaries of $h$. Recall that we treated $i=k/2$ separately, and observe that the largest power of 2 dividing $i+k/2$, $1 \le i < k/2$ is at most $k/4$. Hence, we have $0 \le 2^h \le k/4$ and 
\[
  q_i \le \max\left\{ 2 \alpha^{-1} \sinh(\ln(\alpha^2)/k), 2 \alpha^{-1} (k/4)^{-\lg \alpha} \sinh( \ln(\alpha)/2) \right\}.
\]
We simplify using $\alpha \ge 2$ and $\sinh(x) = x + O(x^2)$ to get
\begin{equation} \label{eq:jakubsqi}
  q_i \le \max\left\{ \ln(\alpha^2)/k + O(1/k^2), (k/4)^{-\lg \alpha} \sinh( \ln(\alpha)/2) \right\}.
\end{equation}
The first term is already of the desired form. For the second one, note that setting $\alpha = 2$ we would get a discrepancy of $4 \sinh(\ln(2)/2) / k = \sqrt{2}/k$. We get a better bound by choosing
\[
  \alpha := 2^{1 + \frac{c}{\lg(k/4)} },
\]
with $c := \lg( \sqrt{2} / \ln(4)) \approx 0.029$.
Then the second bound on $q_i$ from above becomes
\[
  (k/4)^{-\lg \alpha} \sinh( \ln(\alpha)/2) = \frac 4k 2^{-c} \sinh\left( \frac{\ln(2)}{2} \Big(1+\frac{c}{\lg(k/4)}\Big)\right).
\]
The particular choice of $c$ allows to bound the derivative of $\sinh((1+x) \ln(2)/2)$ for $x \in [0,c]$ from above by 
\[
  \frac{\ln(2)}{2} \cosh((1+c) \ln(2)/2) < 0.39.
\]
Hence, we can upper bound 
\[
  \sinh\left( \frac{\ln(2)}{2} \Big(1+\frac{c}{\lg(k/4)}\Big)\right) \le \sinh(\ln(2)/2) + \frac{0.39 c}{\lg(k/4)}.
\]
Thus, in total the second bound on $q_i$ from inequality~(\ref{eq:jakubsqi}) becomes 
\[
  (k/4)^{-\lg \alpha} \sinh( \ln(\alpha)/2) 
  \le  \frac 4k 2^{-c} \sinh(\ln(2)/2) + \frac{4 \cdot 2^{-c} \cdot 0.39 c}{k \lg(k/4)}.
\]
Since $c = \lg( \sqrt{2} / \ln(4)) = \lg( 4 \sinh(\ln(2)/2) / \ln(4) )$, this becomes 
\[
  \le \ln(4)/k + 0.044 / (k \lg(k/4)).
\]

\paragraph{Overall discrepancy:} 
In total, we can bound the discrepancy $q := \Perf(\AlgJakub)$ of our algorithm (now including the factor of $k+1$) by
\[
  q \le (k+1) \max\left\{ \ln(\alpha^2)/k + O(1/k^2), \ln(4)/k + 0.044/(k \lg(k/4)) \right\}.
\]
Using $(k+1)/k = 1 + O(1/k)$ and 
\[
  \ln(\alpha^2) = \ln(4) \left( 1 + \frac{c}{\lg(k/4)} \right) \le \ln(4) + \frac{0.040}{\lg(k/4)},
\]
this bound can be simplified to
\[
  q \le \max\{ \ln(4) + 0.040/\lg(k/4) + O(1/k), \ln(4) + 0.044/\lg(k/4) + O(1/k) \},
\]
which proves Theorem~\ref{thm:algjacub}.
}

\section{Upper Bounds via Combinatorial Optimization}\label{sec:linear_programming}
In this section we show how to find upper bounds on the optimal discrepancy~$q^*(k)$ for fixed $k$. We do so by constructing cyclic algorithms using exhaustive enumeration of all short patterns in the case of very small $k$ or randomized local search on the patterns for larger $k$, combined with linear programming to optimize the checkpoint positions. This yields good algorithms as summarized in Table~\ref{table:small_k_upper_bounds}. In the following we describe our algorithmic approach.

\LV{\paragraph{Finding Checkpoint Positions:}}
First we describe how to find a nearly optimal cyclic algorithm given a pattern~$P$ and a scaling factor $\gamma$, i.e., how to optimize the checkpoint positions. To do so, we construct a linear program that is feasible if a cyclic algorithm with discrepancy $\lambda$ and scaling factor $\gamma$ exists. We use three kinds of constraints: We fix the ordering of the checkpoints, enforce that the $i$-th active checkpoint after one period is a factor $\gamma$ larger than the $i$-th initial checkpoint, and upper bound the discrepancy of each interval during the period by $\lambda$. We then use binary search to optimize $\lambda$. 

\begin{lemma} For a fixed pattern $P$ of length $n$ and scaling factor $\gamma$, let $q^*=\inf_A \Perf(A)$ be the optimal discrepancy among algorithms $A$ using $P$ and $\gamma$. Then finding an algorithm with discrepancy at most $q^* + \epsilon$ reduces to solving $O(\log \epsilon^{-1})$ linear feasibility problems with $O(nk)$ inequalities and $k+n$ variables.
\end{lemma}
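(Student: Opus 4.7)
The plan is to construct a parameterized linear feasibility program $\mathrm{LP}(\lambda)$ that is feasible if and only if a cyclic algorithm with pattern $P$, scaling factor $\gamma$, and discrepancy at most $\lambda$ exists, and then to binary-search over $\lambda$ to approximate $q^*$. Since every interval length and every time $t_i$ is a linear expression of the variables, all constraints will turn out to be linear.

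Take as variables the $k + n$ checkpoint positions $t_1, \ldots, t_{k+n}$; one of them (say $t_k = 1$) is fixed by normalization, which is harmless since the discrepancy is scale invariant. Three families of constraints ensure correctness. \emph{Ordering:} $0 \le t_{i-1} \le t_i$ for $i = 2, \ldots, k+n$, contributing $k + n$ inequalities. \emph{Cyclicity:} by simulating the fixed pattern $P$ on the symbolic initial checkpoints $t_1, \ldots, t_k$, one can determine purely combinatorially which $k$ of the $k+n$ indices correspond to the checkpoints that remain active at the end of the period; listing them as $i_1 < i_2 < \ldots < i_k$, the identities $t_{i_j} = \gamma \cdot t_j$ for $1 \le j \le k$ give $k$ linear equalities (including the boundary condition $t_{i_k} = t_{k+n} = \gamma$). \emph{Discrepancy:} at each time $t_m$ with $k \le m \le k+n$, the active checkpoints together with $0$ and $t_m$ define $k+1$ consecutive intervals $[t_a, t_b]$, where the indices $a$ and $b$ are again determined combinatorially by $P$; for each such interval we impose $(k+1)(t_b - t_a) \le \lambda \cdot t_m$. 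This contributes at most $(k+1)(n+1) = O(nk)$ linear inequalities. In total the LP has $O(k + n)$ variables and $O(nk)$ constraints, all linear in the $t_i$ once $\lambda$ is fixed.

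For the binary search, note that $q^* \ge 1$, because the $k+1$ intervals at any time sum to $T$ so the longest has length at least $T/(k+1)$. An initial constant upper bound $\Lambda$ on $q^*$ is available whenever the problem is feasible: either from the $2$-discrepancy algorithm of Section~\ref{sec:notation} (when $(P,\gamma)$ admits it) or, more generally, by doubling $\lambda$ until $\mathrm{LP}(\lambda)$ becomes feasible. Once a bracket of constant width containing $q^*$ is in hand, standard bisection on $\lambda$ shrinks the uncertainty by a factor two per iteration and terminates after $O(\log \epsilon^{-1})$ solves, at which point feasibility of $\mathrm{LP}(\lambda)$ delivers an assignment of the $t_i$ realising a pattern-$(P,\gamma)$ algorithm with discrepancy at most $q^* + \epsilon$.

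The one step that requires care rather than routine calculation is the combinatorial bookkeeping behind the cyclicity and discrepancy constraints: for each step $m$ of the pattern one must compute, from $P$ alone, the indices of the currently active checkpoints (and, for cyclicity, which indices survive the whole period). This is straightforward — a single forward simulation of $P$ suffices — but it is the only place where the structure of the problem enters non-trivially. After that, every constraint is of the form $\alpha \cdot t_a + \beta \cdot t_b + \gamma \cdot t_m \le 0$ with scalar coefficients, so linearity and the stated constraint and variable counts follow immediately.
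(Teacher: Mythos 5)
Your proof is correct and follows essentially the same approach as the paper: the same $k+n$ position variables with $t_k=1$ normalization, the same three families of linear constraints (ordering, cyclicity/scaling, discrepancy bound), and binary search on $\lambda$ over a constant-width interval. The only cosmetic difference is that the paper introduces explicit symbols $\tau_i^j$ for the $i$-th active checkpoint at step $j$ while you phrase the same combinatorial bookkeeping as a "forward simulation of $P$" (and, minor nit, you reuse $\gamma$ for constraint coefficients in your final paragraph, clashing with the scaling factor).
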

\LV{
	\begin{proof} For a fixed pattern and scaling factor, we can tune the discrepancy of the algorithm by cleverly choosing the time points when to remove an old checkpoint and place a new one. By solving a linear feasibility problem we can check whether a cyclic algorithm with scaling factor $\gamma$ and pattern $P$ exists that guarantees a discrepancy of at most $\lambda$. We can then optimize over $\lambda$ to find an approximately optimal algorithm.

	We construct a linear program with the $k+n$ time points $(t_1,\ldots, t_{k+n})$ as variables (where we can set $t_k = 1$ without loss of generality). It uses three kinds of constraints. The first kind is of the form
	\[
		t_i \leq t_{i+1},
	\]
	for all $i\in [1,k+n)$. These constraints are satisfied if the checkpoint positions have the correct ordering, i.e.\ checkpoints with larger index are placed at later times.

	The second kind of constraints enforces the scaling factor. Since the pattern is fixed, we can compute at all steps which checkpoints are active. For $i\in[1,k]$ and $j\in[0,n]$, let $\tau_i^j$ be the variable of the $i$-th active checkpoint in step $j$ and let $\tau_0^j$ be $0$ for all $j$. It is easy to see that the algorithm has a scaling factor of $\gamma$ if the $i$-th active checkpoint in the last step is larger by a factor of $\gamma$ than in the first step. We encode this as constraints of the form
	\[
		\tau_i^n = \gamma \tau_i^0.
	\]
	Lastly we encode an upper bound of $\lambda$ for the discrepancy. Since the discrepancy of a cyclic algorithm is given by
	\[
		\max_{k < i \leq k+n} (k+1) \bar \ell_{t_i} / t_i,
	\]
	and each $\bar\ell_{t_i}$ can be expressed by a maximum over $k$ terms, we can encode a discrepancy guarantee of $\lambda$ with $nk$ constraints of the form
	\[
		\tau_{i+1}^j - \tau_i^j \leq \lambda \tau_k^j/(k+1), 
	\]
	for all $i\in [0,k)$ and $j\in [0,n]$.

	A feasible solution of these constraints fixes the checkpoint positions and hence, together with the pattern $P$, provides an algorithm with discrepancy at most $\lambda$. Using a simple binary search over $\lambda\in [1, 2]$ we can find an approximately optimal algorithm for this value of $\gamma$ and the pattern $P$. ~
	\end{proof}
}

\LV{\paragraph{Finding Scaling Factors:}
Next we show how to find scaling factors $\gamma$ for which algorithms with good discrepancy exist. We first show an upper bound for $\gamma$.
}
\SV{To find good algorithms without a fixed $\gamma$, we need the following lemma which is proved in the full paper.}
\begin{lemma}\label{lem:gamma_upper_bound} A cyclic algorithm with $k$ checkpoints, discrepancy $\lambda<k$, and a period length of $n$ can have scaling factor at most \SV{$\gamma \leq (1-\lambda/(k+1))^{-n}$.}
\LV{\[
	\gamma \leq \left(\frac{1}{1-\lambda/(k+1)}\right)^n.
\]}
\end{lemma}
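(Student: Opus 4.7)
The plan is to exploit the interval created by the most recent insertion at each step. Whenever a new checkpoint is placed at time $t_i$, the right-most interval $[t_{i-1}, t_i]$ exists and has length $t_i - t_{i-1}$. Since the algorithm has discrepancy at most $\lambda$, this interval must satisfy
\[
  t_i - t_{i-1} \;\le\; \lambda \cdot \frac{t_i}{k+1},
\]
which rearranges to $t_{i-1} \ge t_i \bigl(1 - \lambda/(k+1)\bigr)$, or equivalently
\[
  t_i \;\le\; \frac{t_{i-1}}{1 - \lambda/(k+1)}.
\]
Note that the condition $\lambda < k$ (in fact, $\lambda < k+1$ would suffice) ensures the denominator is strictly positive, so this inequality is meaningful.

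Next, I would iterate this bound across one full period. Applying the inequality to each of the $n$ steps $i = k+1, k+2, \ldots, k+n$, a straightforward induction yields
\[
  t_{k+n} \;\le\; \frac{t_k}{\bigl(1 - \lambda/(k+1)\bigr)^n}.
\]
Finally, by the definition of a cyclic algorithm with scaling factor $\gamma$, the checkpoint positions after one period are a scaled copy of the initial ones; in particular $t_{k+n} = \gamma \, t_k$. Substituting and dividing by $t_k$ (which is positive, since we require $t_k > 0$) gives
\[
  \gamma \;\le\; \bigl(1 - \lambda/(k+1)\bigr)^{-n},
\]
which is the desired bound.

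There is no real obstacle here; the argument is essentially a one-line observation (the rightmost interval's length is forced to be at most a $\lambda/(k+1)$-fraction of the current time) iterated over the period. The only subtlety worth flagging is that one must use the discrepancy bound \emph{at the moment of insertion} rather than at some later time—this is legitimate because Lemma~\ref{lem:steps_are_enough} tells us the supremum in the definition of $\Perf$ is attained at the checkpoint times $t_i$, so the discrepancy constraint does apply at each $t_i$ individually.
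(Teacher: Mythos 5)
Your proof is correct and matches the paper's argument essentially step for step: both use the rightmost interval $[t_{i-1},t_i]$ at insertion time $t_i$, derive $t_i \le t_{i-1}/(1-\lambda/(k+1))$ from the discrepancy bound, iterate over the $n$ steps of a period, and conclude via $\gamma = t_{k+n}/t_k$.
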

\LV{
	\begin{proof} 
	Consider any checkpointing algorithm $A = (t,d)$ with $k$ checkpoints and discrepancy $\lambda$. At any time $t_i$, $i \ge k$, the largest interval has length $\bar \ell_{t_i} \ge t_i - t_{i-1}$, as there is no checkpoint in the time interval $[t_{i-1},t_i]$. Hence, we have 
	\begin{align*}
	  (k+1) \frac{t_i - t_{i-1}}{t_i} \le \lambda.
	\end{align*}
	Rearranging, this yields
	\begin{align*}
	  t_i \le \frac{1}{1 - \lambda/(k+1)} t_{i-1}.
	\end{align*}
	Iterating this $n$ times, we get
	\begin{align*}
	  t_{k+n} \le \left( \frac{1}{1 - \lambda/(k+1)} \right)^n t_k.
	\end{align*}
	Hence, for any cyclic algorithm (with discrepancy $\lambda$, $k$ checkpoints, and a period length of $n$) we get the desired bound on the scaling factor $\gamma = t_{k+n}/t_k$. ~
	\end{proof}
}

\LV{Since algorithms with discrepancy 2 are known~\cite{AhlrothPS11}, we can restrict our attention to $\lambda \le 2$. Hence, f}\SV{F}or any given pattern length $n$, Lemma~\ref{lem:gamma_upper_bound} yields an upper bound on $\gamma$, while a trivial lower bound is given by $\gamma > 1$. 
Now, for any given pattern $P$ we optimize over $\gamma$ using a linear search with a small step size over the possible values for $\gamma$. For each tested $\gamma$, we optimize over the checkpoint positions using the linear programming approach described above. 

\LV{
	\paragraph{Finding Patterns:} 
	For small $k$ and $n$, we can exhaustively enumerate all~$k^{n}$ removal patterns of period length $n$. Some patterns can be discarded as they obviously cannot lead to a good algorithm or are equivalent to some other pattern: No pattern that never removes the first checkpoint can be cyclic. Furthermore, patterns are equivalent under cyclic shifts, so we can assume without loss of generality  that all patterns end with removing the first checkpoint. Lastly, it never makes sense to remove the currently last checkpoint. Hence, for $k$ checkpoints there are at most $(k-1)^{n-1}$ interesting patterns of length $n$. This finishes the description of our combinatorial optimization approach.
}

\paragraph{Results:}
We ran experiments that \SV{exhaustively }try \SV{all }patterns up to length $k$ for $k\in [3,7]$. For $k=8$ we stopped\LV{ the search} after examining \SV{all }patterns of length $7$. For larger $k$ we used a randomized local search to find good patterns. The upper bounds we found are summarized in Table~\ref{table:small_k_upper_bounds}\SV{. 

We cannot prove (near) optimality for these algorithms, because we do not know whether short patterns (or any finite patterns) are sufficient, and whether the discrepancy behaves smoothly with $\gamma$ and a linear search can find a nearly optimal scaling factor. However, we tried all patterns of length $2k$ for $k\in [3,4,5]$ and found no better algorithm. Moreover, decreasing the step size in the linear search for $\gamma$ only yielded small improvements, suggesting that $\lambda$ is continuous in $\gamma$. This suggests that the bounds in Table~\ref{table:small_k_upper_bounds} are indeed close to optimal.
}%
\LV{, and for $k\leq 8$ the removal patterns and time points when to place new checkpoints can be found in Fig.~\ref{fig:small_k_positions}. Note that for $k=3$ this procedure re-discovers the golden ratio algorithm of Sect.~\ref{sec:golden_ratio}.}

Note that we can combine the results presented in Table~\ref{table:small_k_upper_bounds} with \SV{experimental results for } the algorithm \AlgKarl\ (Theorem~\ref{thm:algkarl}\LV{ and Fig.~\ref{fig:karls_algorithm_performance}}) to read off a global upper bound of $q^*(k) \le 1.7$ for the optimal discrepancy for \emph{any} $k$.

\begin{table}
\centering
\small
\begin{tabular}{l@{|}*{18}{c@{\hspace{1mm}}}}
$k$ 	&  3 	& 4 	& 5 	& 6 	& 7 	& 8 	& 9 	& 10 	& 15 	& 20 	& 30 	& 50 	& 100\\\hline
Discr. & 1.529 & 1.541 & 1.472 & 1.498 & 1.499 & 1.499 & 1.488 & 1.492 & 1.466 & 1.457 & 1.466 & 1.481 & 1.484
\end{tabular}

\caption{Upper bounds for different $k$. For $k<8$ all patterns up to length $k$ were tried. For $k=8$ all patterns up to length $7$ were tried. For larger $k$, patterns were found via randomized local search.}
\label{table:small_k_upper_bounds}
\end{table}
\LV{
	\begin{figure}
	\centering
	\includegraphics[width=\linewidth]{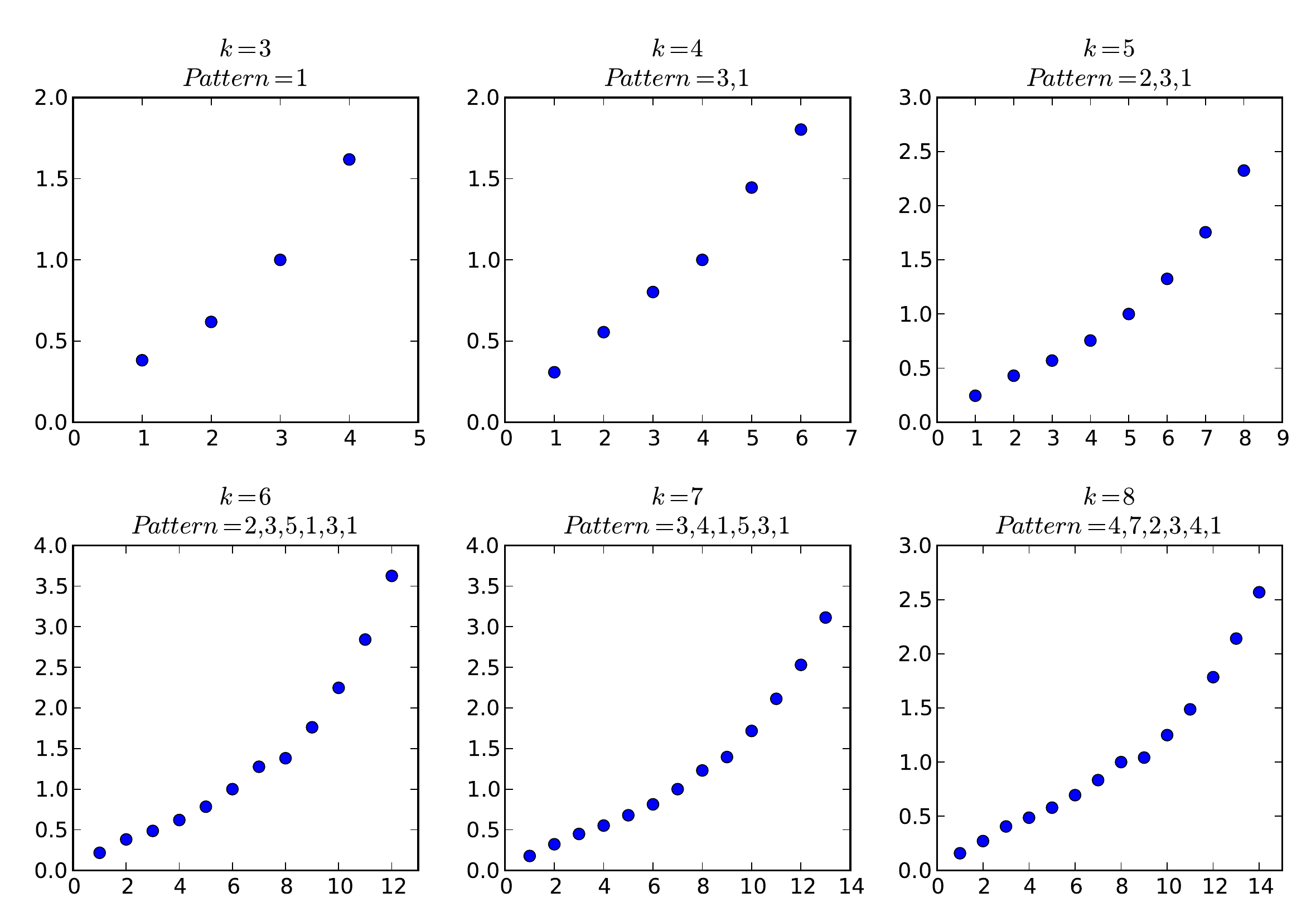}
	\caption{Time points where the $i$-th checkpoint is placed to achieve the bounds of Table~\ref{table:small_k_upper_bounds}. Time is on the $y$-Axis, iteration is on the $x$-Axis.}
	\label{fig:small_k_positions}
	\end{figure}
}
For a fixed pattern the method is efficient enough to find good checkpoint positions for much larger $k$. For $k\leq 1000$ we experimentally compared the algorithm \AlgKarl\ of Sect.~\ref{sec:karls_algorithm} with algorithms found for its pattern $(1,\ldots,k-1)$. The experiments show that for $k=1000$ \AlgKarl\ is within 4.5\% of the optimized bounds. For the algorithm \AlgJakub\ of Sect.~\ref{sec:jakubs_algorithm}, this comparison is even more favorable. For $k=1024$ the algorithm places its checkpoints so well that the optimization procedure improves discrepancy only by 1.9\%. \LV{The results are summarized in Fig.~\ref{fig:karls_algorithm_performance} and Fig.~\ref{fig:jakubs_algorithm_performance}.}

\LV{
	\begin{figure}
	\centering
	\includegraphics[width=\linewidth]{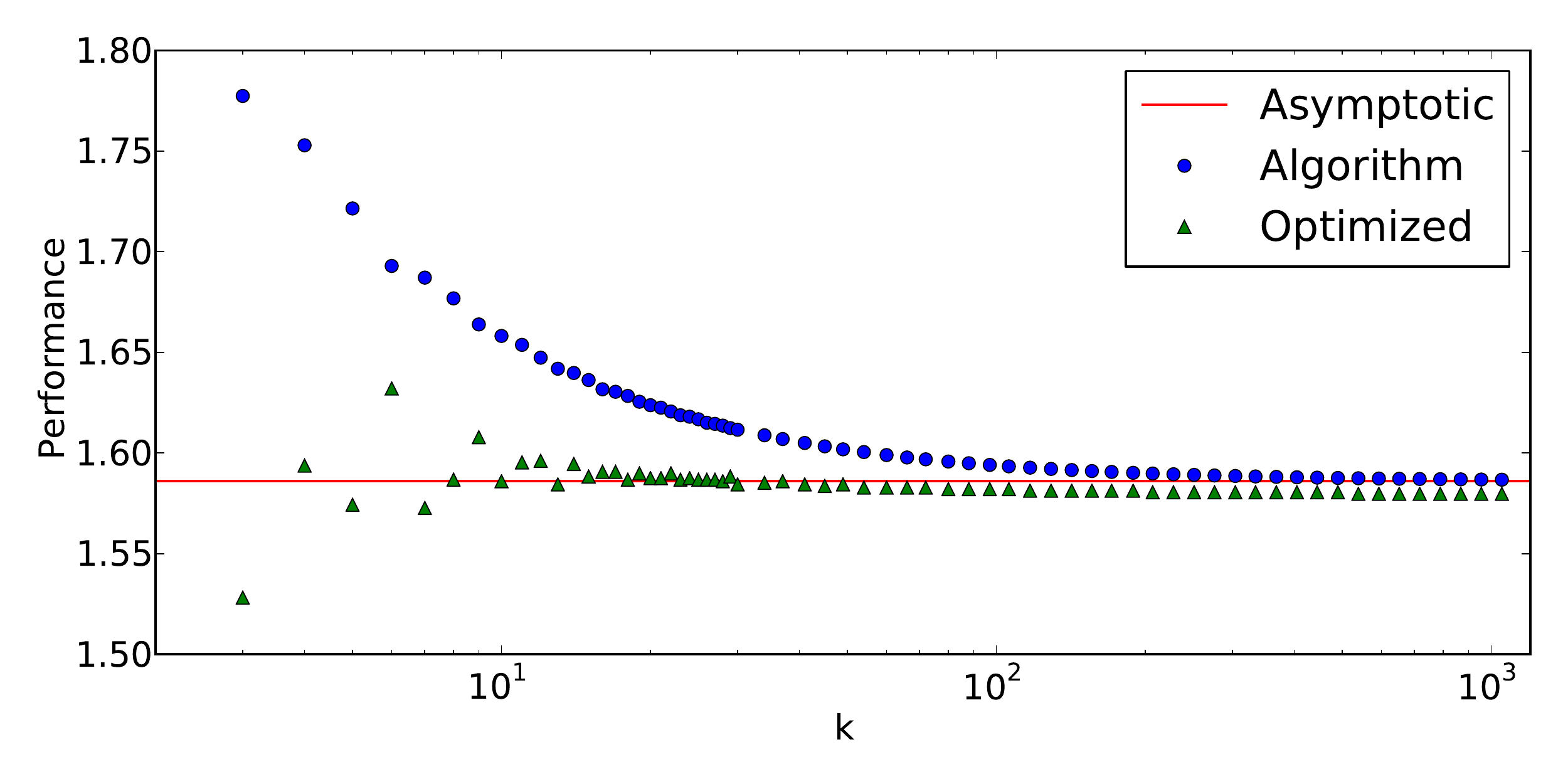}
	\caption{The discrepancy of algorithm \AlgKarl\ from Sect.~\ref{sec:karls_algorithm} for different values of $k$ compared with the upper bounds for its pattern found via the combinatorial method from Sect.~\ref{sec:linear_programming}. For large $k$ \AlgKarl\ is about 4.5\% worse.
	}
	\label{fig:karls_algorithm_performance}
	\end{figure}
}

\LV{
	\begin{figure}
	\centering
	\includegraphics[width=\linewidth]{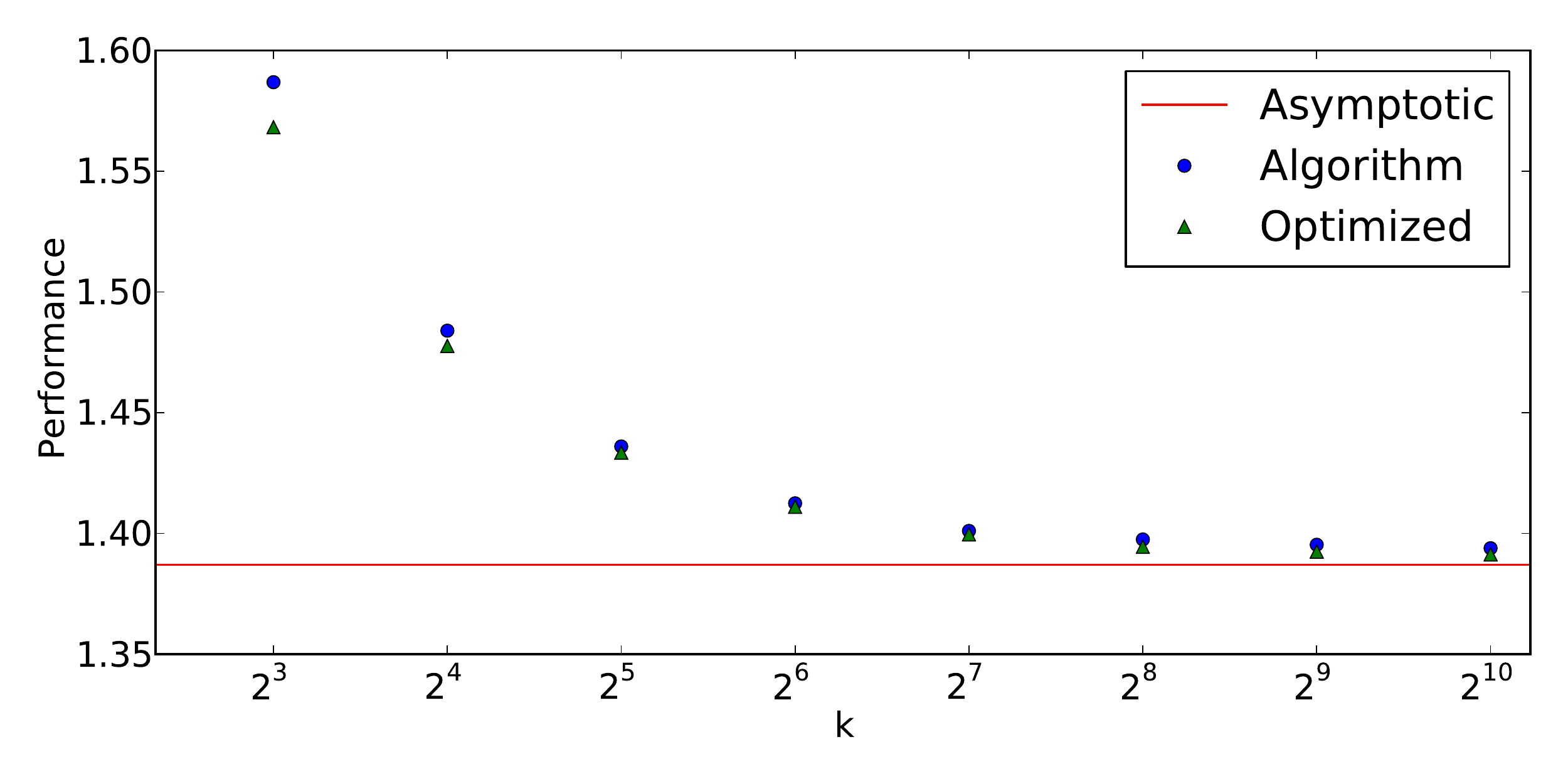}
	\caption{The discrepancy of the algorithm from Sect.~\ref{sec:jakubs_algorithm} for some values of $k$, compared with the upper bounds for its pattern found via the combinatorial method from Sect.~\ref{sec:linear_programming}. For $k=1024$, the optimization procedure finds a checkpoint placement with only 1.9\% better discrepancy.}
	\label{fig:jakubs_algorithm_performance}
	\end{figure}
}
\LV{
	\paragraph{Do we find optimal algorithms?}
	One could ask whether the algorithms from Table~\ref{table:small_k_upper_bounds} are optimal, or at least near optimal. There are two steps in above optimization algorithm that prevent this question to be answered positively. First, we are only optimizing over short patterns, and it might be that much larger pattern lengths are necessary for optimal checkpointing algorithms. Second, we do not know how smoothly the optimal discrepancy for fixed pattern $P$ and scaling factor~$\gamma$ behaves with varying $\gamma$, i.e., we do not know whether our linear search for $\gamma$ yields any approximation on the discrepancy $\lambda$. 
	However, in experiments we tried all patterns of length $2k$ for $k\in [3,4,5]$ and found no better algorithm than for the shorter patterns of length up to $k$. Moreover, smaller step sizes in the linear search for $\gamma$ lead only to small improvements, indicating that the discrepancy is continuous in $\gamma$. This suggests that the reported algorithms might be near optimal.
}

\section{Existence of Optimal Algorithms}\label{sec:existence}
In this section, we prove that optimal algorithms for the checkpointing problem exist, i.e., that there is an algorithm having discrepancy equal to the infimum discrepancy $q^*(k) := \inf_A \Perf(A)$ among all algorithms for $k$ checkpoints. 
\begin{theorem} \label{thm:existence}
  For each $k$ there exists a checkpointing algorithm $A$ for $k$ checkpoints with $\Perf(A) = q^*(k)$, i.e., there is an optimal checkpointing algorithm.
\end{theorem}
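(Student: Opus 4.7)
The plan is a standard compactness and diagonalization argument. Start from a sequence of algorithms $A_n = (t^{(n)}, d^{(n)})$ with $\Perf(A_n) \to q^*(k)$. Since the discrepancy is scale-invariant, normalize so that $t_k^{(n)} = 1$ for every $n$. First I would bound the checkpoint times uniformly in $n$: the argument in the proof of Lemma~\ref{lem:gamma_upper_bound} only uses the discrepancy of the interval $[t_{i-1}, t_i]$ at time $t_i$, so it applies to every algorithm and gives
\[
  t_i^{(n)} \le \bigl(1 - \Perf(A_n)/(k+1)\bigr)^{-(i-k)} \quad \text{for } i \ge k.
\]
Since $\Perf(A_n) \le 2$ eventually, each $t_i^{(n)}$ lives in a compact set of the real line.

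Next, I would extract by a diagonal argument a subsequence along which simultaneously (i) $t_i^{(n)} \to t_i^*$ for every $i \ge 1$, and (ii) the deletion indices $d_i^{(n)} \in \{1, \ldots, i-1\}$ are eventually constant equal to some $d_i^*$ for every $i > k$; the latter is possible because each $d_i^{(n)}$ takes only finitely many values. Define the candidate $A^* = (t^*, d^*)$. I would then verify that $A^*$ is a legal algorithm: monotonicity of $t^*$ and $t_k^* = 1 > 0$ pass to the limit, while $d_i^* < i$ is inherited. Injectivity of $d^*$ follows by contradiction: if $d_i^* = d_j^*$ for $i < j$, then eventually $d_i^{(n)} = d_j^{(n)}$, contradicting injectivity of $d^{(n)}$. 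Finally, $\lim_{i\to\infty} t_i^* = \infty$: otherwise $t_i^* \to L < \infty$, and at any time $T > L$ the most recent checkpoint would be at most $L$, producing a last-interval discrepancy $(k+1)(T-L)/T \to k+1$, which contradicts the discrepancy bound derived below.

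To bound the discrepancy, by Lemma~\ref{lem:steps_are_enough} it suffices to verify $q(A^*, t_i^*) \le q^*(k)$ for every $i \ge k$. Fix $i$. The set of active checkpoints at time $t_i^*$ is determined by the finitely many stabilized values $d_{k+1}^*, \ldots, d_i^*$, so along the subsequence the active checkpoints of $A_n$ at time $t_i^{(n)}$ are indexed by the same subset of $\{1, \ldots, i\}$. Because the map sending $(t_1, \ldots, t_i)$ with $t_i \ge 1$ to the discrepancy $(k+1) \bar\ell_{t_i}/t_i$ of the resulting configuration is continuous, we obtain
\[
  q(A^*, t_i^*) = \lim_{n \to \infty} q(A_n, t_i^{(n)}) \le \lim_{n \to \infty} \Perf(A_n) = q^*(k),
\]
as required. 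Combined with the trivial lower bound $\Perf(A^*) \ge q^*(k)$, this yields $\Perf(A^*) = q^*(k)$.

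The main obstacle is the careful bookkeeping of the diagonal argument together with verifying that the limit algorithm $A^*$ still satisfies the defining properties (in particular $\lim t_i^* = \infty$ and injectivity of $d^*$); the continuity step at the end is then immediate because each time $t_i^*$ involves only finitely many checkpoint positions that have already been shown to converge.
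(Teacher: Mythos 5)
There is a genuine gap, and it is precisely the step the paper devotes the most technical effort to, so it is worth spelling out.

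Your diagonal argument establishes, for each \emph{fixed} $i$, that $t_i^{(n)} \to t_i^*$ with uniform upper bounds. But it gives no lower bound on the growth of the limit sequence $(t_i^*)_i$, and without one, $\lim_{i\to\infty} t_i^* = \infty$ can genuinely fail. Concretely, take any good algorithm $B$ and build $A_n$ as follows: at times $1+1/n, 1+2/n, \ldots, 2$, repeatedly remove the most recent checkpoint and re-place it at the current time (each such step creates only the negligible interval just to the left of the current time, so $\Perf(A_n) \to \Perf(B)$), and then continue like $B$ from time $2$ onward. Along this sequence, $t_i^{(n)} \to 1$ for every fixed $i \ge k$, so $t_i^* = 1$ for all $i$ and the diagonal limit is degenerate.

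Your attempt to rule this out is circular. You want to conclude $\lim_i t_i^* = \infty$ ``because otherwise the last-interval discrepancy diverges, contradicting the discrepancy bound.'' But the discrepancy bound you derive is $q(A^*, t_i^*) \le q^*(k)$ for the \emph{sample times} $t_i^*$ only; passing from this to a bound on $\Perf(A^*) = \sup_{T} q(A^*, T)$ requires Lemma~\ref{lem:steps_are_enough}, whose proof needs the times $(t_i^*)_i$ to exhaust all of $[t_k, \infty)$, i.e.\ it already assumes $\lim_i t_i^* = \infty$. So the divergence of the last-interval discrepancy at times $T > L$ is not a contradiction; it just means $A^*$ is not a valid algorithm and your argument produced nothing.

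What is missing is exactly Lemma~\ref{lprogress} of the paper: any algorithm with bounded discrepancy can be modified, without increasing the discrepancy, so that $t_{k+3} \ge (1 + 1/k)\, t_k$. Applying this to (the tails of) the $A_n$ before taking limits gives the lower bound on growth that forces $\lim_i t_i^* = \infty$. The paper actually avoids the global diagonal argument altogether: it first shows optimal \emph{initial} configurations exist by compactness, then uses Lemma~\ref{lprogress} together with a local compactness step (Lemma~\ref{lstep}) to extend an optimal configuration by three checkpoints while (i) keeping the partial discrepancy at most $q^*(k)$, (ii) guaranteeing multiplicative progress, and (iii) landing in another optimal configuration; iterating this gives the optimal algorithm. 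Your high-level outline (normalize, compactify, take limits, verify by continuity) is reasonable, but without a progress guarantee the limit object need not be an algorithm, and that is the heart of the theorem.
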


\LV{As we will see throughout this section, t}\SV{T}his a non-trivial statement. From the proof of this statement, we gain additional insight in the behavior of good algorithms. In particular, we show that we can assume without increasing discrepancy that for all $i$ the $i$-th checkpoint is set by a factor of at least $(1 + 1/k)^{\Theta(i)}$ later than the first checkpoint.

\LV{
	An initial set of checkpoints can be described by a vector $x = (x_1, \ldots, x_k)$, $0 \le x_1 \le \ldots \le x_k$. Since $x = (0, \ldots, 0)$ can never be extended to a checkpointing algorithm of finite discrepancy, we shall always assume $x \neq 0$. Denote by $X$ the set of all initial sets of checkpoints (described by vectors $x \neq 0$ as above), and by $X_0$ the set of all $x \in X$ with $x_k = 1$.

	We say that $A = (t,d)$ is an algorithm for an initial set $x \in X$ of checkpoints if $t_i = x_i$ for all $i \in [k]$. We denote by $q(x) := \inf_A \Perf(A)$, where $A$ runs over all algorithms for $x$, the \emph{discrepancy} of $x$. 
	An initial set $x \in X$ is called \emph{optimal} if $q(x) = \inf_{x \in X} q(x) = q^*(k)$. 
}

\LV{
	\begin{lemma}
	  Optimal initial sets of checkpoints exist.
	\end{lemma}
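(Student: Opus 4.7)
My plan is to combine the compactness of the space $X_0$ of normalized initial configurations with a perturbation argument that transfers near-optimal algorithms from one initial set to a nearby one. Concretely, I would show that $x \mapsto q(x)$ is upper semi-continuous on $X_0$, from which the existence of a minimizer is immediate. This avoids the more delicate route of extracting a limit \emph{algorithm} from a minimizing sequence.

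First I would fix a minimizing sequence $x^{(n)} \in X_0$ with $q(x^{(n)}) \to q^*(k)$. Since $X_0 = \{x \in [0,1]^k : 0 \le x_1 \le \ldots \le x_k = 1\}$ is closed and bounded in $\mathbb{R}^k$, it is compact, so I can pass to a subsequence with $x^{(n)} \to x^*$ in $X_0$; let $\delta_n := \|x^* - x^{(n)}\|_\infty \to 0$. For each $n$, pick an algorithm $A_n = (t^{(n)}, d^{(n)})$ for $x^{(n)}$ with $\Perf(A_n) \le q(x^{(n)}) + 1/n$ and lift it to an algorithm $A = (t, d^{(n)})$ for $x^*$ by setting $t_i := x_i^*$ for $i \le k$ and keeping $t_i := t_i^{(n)}$ for $i > k$. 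Because $x_k^* = x_k^{(n)} = 1$, the sequence $t$ remains non-decreasing at $i = k$ and $\lim_i t_i = \infty$ is inherited from $A_n$; the deletion rule $d^{(n)}$ is still injective with $d_i < i$. Hence $A$ is a valid algorithm for $x^*$.

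The key comparison is $\Perf(A) \le \Perf(A_n) + 2(k+1)\delta_n$. At any $T \ge t_k = 1$, the algorithms $A$ and $A_n$ have the same set of active indices (the deletion rule is the same), identical positions for indices $> k$, and positions differing by at most $\delta_n$ for indices $\le k$. Augmenting both sets of positions with the implicit endpoints $0$ and $T$, the two resulting sequences have the same length and, after sorting, differ pointwise by at most $\delta_n$ (sorting does not increase $\ell_\infty$-distance between two sequences of the same length). Consecutive gaps therefore differ by at most $2\delta_n$, so $|\bar \ell_T^A - \bar \ell_T^{A_n}| \le 2\delta_n$ and $q(A,T) \le q(A_n,T) + 2(k+1)\delta_n/T \le q(A_n,T) + 2(k+1)\delta_n$. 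Taking the supremum over $T \ge 1$ yields the claim. Combining, $q(x^*) \le \Perf(A) \le q(x^{(n)}) + 1/n + 2(k+1)\delta_n \to q^*(k)$, while $q(x^*) \ge q^*(k)$ is immediate from the definition, so $x^*$ is optimal.

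The main obstacle is to resist the natural temptation of extracting a limit algorithm from the $A_n$ directly. While Lemma~\ref{lem:gamma_upper_bound} gives a uniform upper bound on the growth rate of $t_i^{(n)}$, there is no obvious matching lower bound, so a diagonal extraction could yield limits $t_i^*$ accumulating at a finite value and failing the requirement $\lim_i t_i = \infty$. Transferring algorithms across initial sets via the perturbation above sidesteps this issue entirely, since the step times $t_i$ for $i > k$ are simply copied from $A_n$ and are never taken in a limit.
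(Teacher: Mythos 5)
Your proof is correct and follows essentially the same approach as the paper: restrict to the compact normalized slice $X_0$, then show that perturbing the initial configuration by $\delta$ changes the optimal discrepancy by $O(\delta)$ via copying the step times and deletion rule of a near-optimal algorithm, and conclude by compactness. (Incidentally, the paper states the Lipschitz constant as $2$ rather than the correct $2(k+1)$ that you obtain; this is harmless for continuity but your version is more careful.)
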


	\begin{proof}
	  Since the discrepancy of an initial set of checkpoints is invariant under scaling, that is, $q(x) = q(\lambda x)$ for all $x \in X$ and $\lambda > 0$, we have $\inf_{x \in X} q(x) = \inf_{x \in X_0} q(x)$. 
	  
	  It is not hard to see that $q(\cdot)$ is continuous on $X_0$: Let $x,x' \in X_0$ with $|x-x'|_\infty \le \varepsilon$ and consider an algorithm $A = (t,d)$ for $x$. We construct an algorithm $A' = (t',d)$ for $x'$ by setting $t_i' = t_i$ for $i > k$. Then $|\Perf(A) - \Perf(A')| \le 2 \varepsilon$, since any interval's length is changed by at most $2 \varepsilon$. This implies $|q(x) - q(x')| \le 2 \varepsilon$ and, thus, shows continuity of $q(\cdot )$. 
	  
	  Now, since $q(\cdot)$ is continuous on $X_0$ and $X_0$ is compact, there exists an $x \in X_0$ such that $q(x) = \inf_{x \in X_0} q(x) = q^*(k)$. ~
	\end{proof}

	An easy observation is that if some checkpointing algorithm leads to a vector $x$ of checkpoints at some time, then we may continue from there using any other algorithm for $x$. The discrepancy of this combined algorithm is at most the maximum of the two discrepancies.

	\begin{lemma}\label{ladd}
	  Let $A = (t,d)$ be a checkpointing algorithm. Let $i > k$. We call $q_{A,i} = \max_{j \in [k..i]} \bar \ell_{t_j} (k+1) / t_j$ the \emph{partial discrepancy} of $A$ observed in the time up to $t_i$. Assume that when running $A$, at time $t_i$ the checkpoints $x = (x_1, \ldots, x_k = t_i)$ are active. Let $A' = (t',d')$ be an algorithm for $x$. Then the checkpointing algorithm obtained from running $A$ until time $t_i$ and then continuing with algorithm~$A'$ is a checkpointing algorithm that has discrepancy at most $\max\{q_{A,i}, \Perf(A')\}$. If we run this combined algorithm only until some time $t'_j$, then the partial discrepancy observed till then is $\max\{q_{A,i},q_{A',j}\}$.
	\end{lemma}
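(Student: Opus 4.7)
The plan is to explicitly construct the combined algorithm $A^* = (t^*, d^*)$ and then analyze its discrepancy in two time regimes. For the checkpoint sequence, set $t^*_j = t_j$ for $j \le i$ and $t^*_{i + (m-k)} = t'_m$ for $m > k$. The gluing is well-defined precisely because $A'$ is an algorithm for $x$: its first $k$ checkpoints $(t'_1, \dots, t'_k)$ are by assumption equal to $(x_1, \dots, x_k)$, which is exactly the set of active checkpoints of $A$ at time $t_i$. Thus one can define $d^*$ so that steps $k+1, \dots, i$ use $d$, and step $i+(m-k)$ for $m > k$ removes the checkpoint located at the position that $A'$ would remove at its step $m$ (translating $A'$'s indexing into the global indices through the identification of $A'$'s initial checkpoints with $A$'s active checkpoints at time $t_i$).

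Next I would show that the active checkpoints of $A^*$ at any time $T \ge t_k$ match either $A$'s or $A'$'s: for $T \in [t_k, t_i]$ they coincide with $A$'s active checkpoints at time $T$, and for $T \ge t_i$ they coincide with $A'$'s active checkpoints at time $T$ (using that both descriptions agree at the handover point $T = t_i = t'_k$). Consequently, for every checkpointing step $T = t^*_j$ of $A^*$ with $k \le j \le i$ one has $q(A^*, T) = q(A, t_j) \le q_{A,i}$, and for every step $T = t^*_j$ with $j \ge i$ one has $q(A^*, T) = q(A', t'_{j-i+k}) \le \Perf(A')$.

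By Lemma~\ref{lem:steps_are_enough} it suffices to bound the discrepancy of $A^*$ only at its checkpoint placement times, and the previous paragraph already does this. The only mild subtlety—and the main obstacle I expect—is the interstitial interval $(t_i, t'_{k+1})$, which contains no checkpoint placement of $A^*$ other than its endpoints. For any $T$ in this interval, Lemma~\ref{lem:steps_are_enough} bounds $q(A^*, T)$ by the maximum of $q(A^*, t_i) \le q_{A,i}$ and $q(A^*, t'_{k+1}) \le \Perf(A')$, which fits the desired bound $\max\{q_{A,i}, \Perf(A')\}$. Taking the supremum over all $T \ge t_k$ then gives $\Perf(A^*) \le \max\{q_{A,i}, \Perf(A')\}$.

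For the partial-discrepancy variant, I would repeat the same analysis but only take the supremum over $T \in [t_k, t'_j]$. The first regime contributes at most $q_{A,i}$, and the second regime contributes at most $\max_{\ell \in [k..j]} q(A', t'_\ell) = q_{A', j}$; combining these yields $\max\{q_{A,i}, q_{A',j}\}$, as claimed.
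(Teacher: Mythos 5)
Your proof is correct and complete. The paper itself dispatches this lemma with the single word ``Trivial,'' so there is no distinct argument to compare against; what you have done is spell out the straightforward gluing construction and the verification---via Lemma~\ref{lem:steps_are_enough}---that the paper implicitly takes for granted.
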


	\begin{proof} Trivial. ~ \end{proof}

	The above lemma implies that in the following, we may instead of looking at an arbitrary time simply assume that the algorithm just started, that is, that the current set of checkpoints is the initial one.

	The following lemma shows that we can, without loss of discrepancy, assume that an algorithm for the checkpointing problem does not set checkpoints too close together. While also of independent interest, among others because it shows how to keep additional costs for setting and removing checkpoints low, we shall need this statement in our proof that optimal checkpointing algorithms exist.

	\begin{lemma}\label{lprogress}
	Let $A = (t,d)$ be an algorithm for the checkpointing problem with $\Perf(A) < k+1$. Then there is an algorithm $A' = (t',d')$ with the same starting position such that (i) $\Perf(A') \le \Perf(A)$ and 
	\begin{align*}
	  (ii) \; t'_{k+3} \ge t'_{k} \bigg(1 + \frac{\Perf(A)}{k+1-\Perf(A)} \bigg) \ge t'_k \left(1 + \frac 1k \right).
	\end{align*}
	\end{lemma}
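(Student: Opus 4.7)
Set $c := \Perf(A) < k+1$ and $\beta := (k+1)/(k+1-c) = 1 + c/(k+1-c)$, so the goal is to construct $A'$ with the same first $k$ checkpoints as $A$, $\Perf(A') \le c$, and $t'_{k+3} \ge \beta t_k$. The second inequality in (ii) follows from the elementary bound $c \ge 1$ (any algorithm satisfies $\Perf \ge 1$ since $\bar\ell_T$ is at least the average interval length $T/(k+1)$): $c \ge 1$ yields $c/(k+1-c) \ge 1/k$. If $A$ already satisfies $t_{k+3} \ge \beta t_k$, then $A' := A$ works; hence assume $t_{k+1}, t_{k+2}, t_{k+3}$ all lie in $(t_k, \beta t_k)$.

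My main plan is a time-scaling construction: set $\sigma := \beta t_k / t_{k+1} > 1$, and define $t'_i := t_i$ for $i \le k$, $t'_i := \sigma t_i$ for $i > k$, using exactly the same deletion indices as $A$. Then $t'_{k+1} = \beta t_k$ and (ii) follows immediately with room to spare. For the discrepancy, observe that at time $t'_{k+j} = \sigma t_{k+j}$ the active checkpoints of $A'$ are the non-deleted originals together with $\sigma$-scaled copies of $A$'s new checkpoints, so every interval strictly between two new checkpoints is scaled by $\sigma$ and every interval strictly between two originals is unchanged—in both cases the discrepancy, normalized by the larger denominator $t'_{k+j}$, does not increase compared with $A$ at time $t_{k+j}$.

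The delicate intervals are the two that straddle the ``original / new'' boundary, in particular the right-end interval $[L, t'_{k+j}]$ where $L$ is the largest active checkpoint below $t'_{k+j}$. An elementary calculation shows that if $L = t_k$ (i.e., $t_k$ is still active and no ``new'' checkpoint has survived) the right-end discrepancy comes out to exactly $c$; more generally, if $L$ is still some original checkpoint $t_\ell$ with $t_\ell \ge t_{k+j}/\beta$ (the bound forced on $A$ by $\Perf(A) \le c$) the scaling fails to preserve discrepancy and the argument breaks. The only way this happens for $j=1$ is when $A$'s very first step already deletes $t_k$, and more generally it is caused by $A$'s early deletions targeting the most recent original checkpoints.

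The main obstacle—where the $k+3$ versus a cleaner $k+1$ in the statement really matters—is handling this pathological case. The plan is to do a case analysis on $d_{k+1}, d_{k+2}, d_{k+3}$. Using the structural bound $t_{k-j} \ge t_k/\beta^j$ implied by $\Perf(A) \le c$, I would replace the failed first scaled step by one or two ``preparatory'' steps that insert at locations of the form $\beta t_{k-j}$, each chosen so the newly created right-end and merge intervals have discrepancy exactly $c$; the structural bound guarantees that these preparatory times are monotone and that within three steps the current time has advanced to at least $\beta t_k$. Once step $k+3$ is placed, Lemma~\ref{ladd} lets us continue from the resulting state with any algorithm of discrepancy $\le c$ (for instance, a suitable scaled copy of $A$ run from the matching state), completing $A'$ without ever exceeding $c$. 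The bulk of the proof is this edge-case case analysis and the verification that three steps always suffice.
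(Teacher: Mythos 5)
The time-scaling construction is a neat idea, but as you partially acknowledge it does not preserve discrepancy for intervals that straddle the original/new boundary, and this is the crux of the lemma rather than a corner case. Concretely, if $A$ deletes $t_k$ at step $k+1$, then in $A'$ at time $t'_{k+1}=\beta t_k$ the merged interval $[t_{k-1},\beta t_k]$ has discrepancy $(k+1)\bigl(1-t_{k-1}/(\beta t_k)\bigr)$; requiring this to be at most $c=\Perf(A)$ forces $t_{k-1}\ge t_k$, whereas $\Perf(A)\le c$ together with your standing assumption $t_{k+1}<\beta t_k$ only forces $t_{k-1}\ge t_{k+1}/\beta<t_k$. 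So the construction fails whenever $t_{k-1}<t_k$, which is the generic situation. More generally, at any scaled step $t'_{k+j}$ whose last surviving checkpoint $L$ is an original $t_\ell$ with $\ell\le k$, your $A'$ would need $t_\ell\ge\sigma t_{k+j}/\beta$, strictly stronger (by the factor $\sigma>1$) than the $t_\ell\ge t_{k+j}/\beta$ that $\Perf(A)\le c$ actually gives. Even the claim that $L=t_k$ yields discrepancy exactly $c$ holds only for $j=1$; for $j>1$ with $t_{k+j}>t_{k+1}$ it already exceeds $c$.

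The proposed repair via ``preparatory steps'' is not a proof but a hope, and the ``structural bound'' $t_{k-j}\ge t_k/\beta^j$ it leans on is false: discrepancy is only measured at times $T\ge t_k$, so $\Perf(A)\le c$ yields only the additive bound $t_{k-j}\ge t_k\bigl(1-j(1-1/\beta)\bigr)$; in particular $\beta t_{k-j}$ can lie strictly below $t_k$ for $j\ge 2$, i.e., in the past, so inserting there is meaningless. This is a genuine gap, not a routine verification left to the reader. For contrast, the paper avoids scaling entirely and argues by contradiction from a counterexample $A$ minimizing the number of insertions made before time $(1+r)t_k$, with $r=\Perf(A)/(k+1-\Perf(A))$: if some intermediate $t_{k+j}$ is deleted before both of its neighbours $t_{k+j-1},t_{k+j+1}$, then it was useless and its insertion can be skipped (its own deletion target inherits the deletion that $A$ performs when it removes $t_{k+j}$); the only newly created interval $[t_{k+j-1},t_{k+j+1}]$ contributes at most $(k+1)r/(1+r)=\Perf(A)$, contradicting minimality. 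Otherwise a swap of two deletion times manufactures such a useless checkpoint, again contradicting minimality. If you want to salvage the scaling idea you need a genuinely different device for the straddling intervals; right now the central case of the lemma is unproved.
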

	\begin{proof}
	Let $r = \Perf(A)/(k+1-\Perf(A))$ for convenience.
	By way of contradiction, assume that the lemma is false. Let $A$ be a counter-example such that $i := \min\{i \in \N \mid t_{k+i} \ge 1 + r\}$ is minimal 
	(the minimum is well-defined, since for any algorithm the sequence $(t_i)_i$ tends to infinity). Note that $i \ge 4$, since $A$ is a counter-example.

	Assume that there is a $j \in [1..i-1]$ such that $t_{k+j}$ in the further run of $A$ is removed (and replaced by the then current time $t_x$) earlier than both $t_{k+j-1}$ and $t_{k+j+1}$. Consider the Algorithm $A'$ that arises from $A$ by the following modifications. Let $t_y$ be the checkpoint that was removed to install the checkpoint $t_j$. Let $A'$ be the checkpointing algorithm that proceeds as $A$ except that $t_y$ is not replaced by $t_{k+j}$, but by $t_x$, and $t_{k+j}$ is never created. The only interval which could cause this algorithm to have a worse discrepancy than $A$ is $[t_{k+j-1},t_{k+j+1}]$. However, this interval contributes  $(k+1)(t_{k+j+1}-t_{k+j-1})/t_{k+j+1} \le (k+1) r/(1+r) \le \Perf(A)$ to the discrepancy of $A'$. Hence, $\Perf(A') \le \Perf(A)$ and $A'$ has fewer checkpoints in the interval $[1,1+r]$ contradicting the minimality of $A$. Thus, there is no $j \in [1..i-1]$ such that $t_{k+j}$ is removed earlier than both $t_{k+j-1}$and $t_{k+j+1}$ (*).

	We consider now separately the two cases that $t_{k+1}$ is removed earlier than $t_{k+i-2}$ and vice versa. Note first that $k+1 < k+i-2$ by assumption that $i \ge 4$. 

	Assume first that $t_{k+1}$ is removed (at some time $t_x$) earlier than $t_{t+i-2}$. Then $t_k$ must have been removed even earlier (at some time $t_y$), otherwise we found a contradiction to (*). Let $A'$ be an algorithm working identically as $A$, except that at time $t_y$ the checkpoint $t_{k+1}$ is removed (instead of $t_k$) and at time $t_x$ the checkpoint $t_k$ is removed (instead of $t_{k+1}$). Since the checkpoint at $t_{t+i-2}$ is still present, the only interval affected by this exchange, namely the one with $t_k$ as left endpoint, has length at most $r$. Hence as above, this contributes at most $\Perf(A)$ to the discrepancy of $A'$. The algorithm $A'$ has the property that there is a checkpoint in between $t_k$ and $t_{k+i-2}$ which is removed before these two points. The earliest such checkpoint, call it $t_{k+j}$, has the property that $t_{k+j}$ is removed earlier than both $t_{k+j-1}$ and $t_{k+j+1}$, contradicting earlier arguments. 

	A symmetric argument shows that also $t_{k+i-2}$ being removed before $t_{k+1}$ leads to a contradiction. Consequently, our initial assumption that $i \ge 4$ cannot hold, proving the claim. ~
	\end{proof}

	The following is a global variant of Lemma~\ref{lprogress}. It shows that any reasonable checkpointing algorithm does not store new checkpoints too often. 
}

\begin{theorem}
  Let $A = (t,d)$ be a checkpointing algorithm with $\Perf(A) < k-1$. Then there is an algorithm $A' = (t',d')$ with the same starting position such that (i) $\Perf(A') \le \Perf(A)$ and (ii) $t'_{i+3} \ge (1+1/k) \cdot t'_i$ for all $i \ge k$.
\end{theorem}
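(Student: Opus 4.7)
The plan is to iterate Lemma~\ref{lprogress} at every stage $i \ge k$, splicing the resulting algorithms together via Lemma~\ref{ladd}, and then passing to a limit. Set $A^{(0)} := A$. Having constructed $A^{(j)}$ with placement sequence $t^{(j)}$ satisfying $\Perf(A^{(j)}) \le \Perf(A)$ and property~(ii) for $i = k, k+1, \ldots, k+j-1$, view the $k$ checkpoints active at time $T := t^{(j)}_{k+j}$ as a new initial configuration $x$ and the remainder of $A^{(j)}$ past $T$ as an algorithm for $x$. Applying Lemma~\ref{lprogress} to this remainder produces a continuation $B$ for $x$ with no larger discrepancy and with $t^{B}_{k+3} \ge (1+1/k)\, t^{B}_{k}$. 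Define $A^{(j+1)}$ by splicing $A^{(j)}$ up to time $T$ with $B$, invoking Lemma~\ref{ladd} to bound the discrepancy of the concatenation.

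The delicate step is ensuring that this splice does not destroy progress properties established at earlier stages. For this I would use the observation, implicit in the proof of Lemma~\ref{lprogress}, that the construction never moves a placement to an earlier time: each of its modifications is either a swap of which previously placed checkpoint gets deleted (leaving all placement times fixed) or the suppression of a single placement event (which only shifts the reindexing of later placements upward). Iterating this gives the monotonicity $t^{B}_{k+i} \ge t^{(j)}_{k+j+i}$ for every $i \ge 1$. Combined with the fact that placements with index at most $k+j$ are untouched by the splice, we obtain
\[
  t^{(j+1)}_{i+3} \;\ge\; t^{(j)}_{i+3} \;\ge\; (1+1/k)\, t^{(j)}_{i} \;=\; (1+1/k)\, t^{(j+1)}_{i}
\]
for $k \le i \le k+j-1$, while the fresh application of Lemma~\ref{lprogress} supplies the property at the new index $i = k+j$.

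Since $A^{(j)}$ and $A^{(j+1)}$ coincide on all placement times with index at most $k+j$, for each fixed $m$ the sequence $(t^{(j)}_m)_{j \ge 0}$ is eventually constant, and its pointwise limit $t^{*}$ (together with the analogously stabilized deletion rule) defines a checkpointing algorithm $A'$. For any finite $T$, the discrepancy $q(A',T)$ coincides with $q(A^{(j)},T)$ for all sufficiently large $j$, so $\Perf(A') \le \Perf(A)$, and property~(ii) for every $i \ge k$ is inherited from the finite-stage guarantees.

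The main obstacle I foresee is pinning down the monotonicity property of Lemma~\ref{lprogress}: without it, a later application could reshuffle earlier placements and undo a previously established progress step, collapsing the induction. Overcoming this will likely mean either augmenting the statement of Lemma~\ref{lprogress} with the structural claim that its construction only suppresses placements or swaps deletions (never advancing a placement in time), or re-deriving that fact directly from the two case analyses in its proof. Once monotonicity is in hand, the rest of the argument is bookkeeping, and the limit $A'$ delivers the theorem.
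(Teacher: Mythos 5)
Your proposal is correct and takes essentially the same approach as the paper: repeatedly apply Lemma~\ref{lprogress} (the paper patches only at the earliest small jump, you apply it at every stage, a cosmetic difference) and pass to a pointwise limit. The monotonicity observation you flag as the main obstacle is exactly the implicit fact the paper's one-sentence iteration relies on, and it can indeed be read off from the two modifications in Lemma~\ref{lprogress}'s proof (suppression of a placement event or a swap of deletion targets, neither of which advances any placement time).
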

\LV{
	\begin{proof}
	  Let $j\ge k$ be the smallest index with a small jump, $t_{j+3} < (1+1/k) t_j$. Using Lemma~\ref{lprogress} (on the remainder of algorithm $A$ starting at time $t_j$) we can remove this small jump and get an algorithm $A'=(t',d')$ with $\Perf(A') \le \Perf(A)$ and $t'_{i+3} \ge (1+1/k) \cdot t'_i$ for all $k \le i \le j$, i.e., we patched the earliest small jump. Iterating this patching procedure infinitely often yields the desired algorithm. ~
	\end{proof}

	\begin{lemma}\label{lstep}
	For any optimal initial set $x = (x_1, \ldots, x_k)$, there is an algorithm $A = (t,d)$ such that (i) $q_{A,k+3} = \max_{j \in [k..k+3]} \ell_{t_j} (k+1) / t_j \le q^*(k)$, (ii) $t_{k+3} \ge t_k (1 + 1/k)$, and the set of checkpoints active at time $t_{k+3}$ is again optimal.
	\end{lemma}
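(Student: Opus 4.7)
The plan is to combine Lemma~\ref{lprogress} with a compactness argument over the first three steps of a near-optimal sequence of algorithms, exploiting the continuity of $q(\cdot)$ on $X_0$ that was established earlier. Without loss of generality (by scale invariance) we may assume $x \in X_0$, so $t_k = 1$.

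Since $x$ is optimal, $q(x) = q^*(k)$, so by definition of the infimum there is a sequence of algorithms $A_n = (t^{(n)}, d^{(n)})$ for $x$ with $\Perf(A_n) \le q^*(k) + 1/n$. Because $q^*(k) \le 2 < k+1$, Lemma~\ref{lprogress} applies to each $A_n$ (for $n$ large enough), yielding replacements that still satisfy the same discrepancy bound and additionally $t^{(n)}_{k+3} \ge t^{(n)}_k (1+1/k)$. I would then look only at the first three steps of each $A_n$: three deletion indices (drawn from a finite set) and three time points. After thinning to a subsequence on which the deletion pattern is constant, each time $t^{(n)}_{k+j}$ lies in a compact interval: bounded below by $t_k = 1$, and bounded above because the newly created rightmost interval at each step forces $t^{(n)}_{k+j} \le (1 - (q^*(k) + 1/n)/(k+1))^{-1} t^{(n)}_{k+j-1}$. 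A further subsequence gives convergence of the three time points to limits $\tau_1 \le \tau_2 \le \tau_3$. These limits, together with the common deletion pattern, define the first three steps of an algorithm $A$; for steps beyond $k+3$ I would extend $A$ arbitrarily, since the conclusion only constrains the partial discrepancy $q_{A,k+3}$.

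It remains to verify the three conclusions. Property (ii) carries directly to the limit. For (i), every interval length appearing in $q_{A,k+3}$ depends continuously on the three step times and on $x$, so $q_{A,k+3} \le \liminf_n \Perf(A_n) \le q^*(k)$. For (iii), let $x'_n$ be the checkpoint set active at time $t^{(n)}_{k+3}$ in $A_n$; by construction $x'_n \to x'$. The tail of $A_n$ after step $k+3$ is itself an algorithm for $x'_n$ with discrepancy at most $\Perf(A_n)$, so $q(x'_n) \le q^*(k) + 1/n$. After rescaling the $x'_n$ and $x'$ to $X_0$ (which only multiplies $q$ by a factor converging to $1$), continuity of $q$ on $X_0$ yields $q(x') \le q^*(k)$; the reverse inequality is immediate from the definition of $q^*(k)$, so $x'$ is optimal.

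The main obstacle is the joint compactness step: both the discrete deletion data and the continuous time data must be made to converge along a common subsequence, and both the partial discrepancy bound and the progress inequality must survive the limit. The key observation is that each of the finitely many interval-length functions appearing at or before step $k+3$ is a continuous function of the three time points and the entries of $x$, so a standard diagonal extraction suffices.
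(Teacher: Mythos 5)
Your sketch follows the same route as the paper's proof: take a sequence of near-optimal algorithms for $x$, invoke Lemma~\ref{lprogress} to enforce the progress inequality $t_{k+3} \ge t_k(1+1/k)$, use boundedness of the first three step times (via the same argument as Lemma~\ref{lem:gamma_upper_bound}) together with finiteness of the deletion choices to extract a convergent subsequence, and pass to the limit. The only cosmetic difference is that you spell out the rescaling and continuity of $q(\cdot)$ on $X_0$ to justify conclusion (iii), where the paper is terser; both are correct.
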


	\begin{proof}
	By the definition of optimality, for each $n \in \N$ there is an algorithm $A^{(n)}$ for $x$ that has discrepancy at most $q^*(k) + 1/n$. 
	Let $(t_{k+1}^{(n)}, t_{k+2}^{(n)}, t_{k+3}^{(n)})$ denote the 
	corresponding next three checkpoints. By Lemma~\ref{lprogress}, we may assume that $t_{k+3}^{(n)} \ge t_k (1+1/k)$ for all $n \in N$.

	Note that (using the same arguments as in Lemma~\ref{lem:gamma_upper_bound}) 
	any algorithm having discrepancy at most $2.5$ satisfies $t_{k+i} \le 6^i t_k$ for any $k \ge 2$. Hence, $(t_{k+1}^{(n)}, t_{k+2}^{(n)}, t_{k+3}^{(n)})_{n \in \N_{\ge 2}}$ is a sequence in the compact space 
	$[t_k,6^3 t_k]^3$. This sequence has a convergent subsequence with limit $(t_{k+1}, t_{k+2}, t_{k+3})$. Also, since there are only finitely many values possible for $(d_{k+1}^{(n)}, d_{k+2}^{(n)}, d_{k+3}^{(n)})$, this subsequence can be chosen such that this $d$-tuple is constant, say $(d_{k+1}, d_{k+2}, d_{k+3})$. For this subsequence, also all $k+1$ intervals existing at the three times of interest converge. Consequently, the discrepancy caused by each of them also converges to a value upper bounded by $q^*(k)$. This defines the three steps of algorithm $A$, satisfying $q_{A,k+3} \le q^*(k)$. 

	Similarly, we observe that the set of checkpoints $x^{(n)}$ active at time $t^{(n)}_{k+3}$ when running algorithm $A^{(n)}$ has discrepancy at most $q^*(k) + 1/n$. Consequently, the active checkpoints we get from the limit checkpoints $(t_{k+1}, t_{k+2}, t_{k+3})$ and deletions $(d_{k+1}, d_{k+2}, d_{k+3})$ are again optimal.

	Finally, since all $t_{k+3}^{(n)} \ge t_k (1 + 1/k)$, this also holds for $t_{k+3}$.  ~
	\end{proof}

	We are now in position to prove the main result of this section, Theorem~\ref{thm:existence}.
	For this, we repeatedly apply Lemma~\ref{lstep}: We start with an optimal set of checkpoints~$x$. Then we run the algorithm delivered by Lemma~\ref{lstep} for three steps. This creates no partial discrepancy larger than $q^*(k)$ and we end up with another optimal set of checkpoints. From this, we continue to apply Lemma~\ref{lstep} and execute three steps of the algorithm obtained. By Lemma~\ref{ladd}, the partial discrepancy of the combined algorithm is again at most $q^*(k)$.
	Iterating infinitely, this yields an optimal algorithm, which proves Theorem~\ref{thm:existence}.
 }

\section{Lower Bound} \label{sec:lower_bound}

In this section, we prove a lower bound on the discrepancy of all checkpointing algorithms. For large $k$ we get a lower bound of roughly $1.3$, so we have a lower bound that is asymptotically larger than the trivial bound of 1. Moreover, it shows that algorithm \AlgJakub\ from Sect.~\ref{sec:jakubs_algorithm} is nearly optimal, as for large $k$ the presented lower bound is within $6\%$ of the discrepancy of \AlgJakub.

\begin{theorem} \label{thm:lower}
  All checkpointing algorithms with $k$ checkpoints have a discrepancy of at least \SV{$2 - \ln 2 - O(k^{-1}) \ge 1.306 - O(k^{-1})$.}
  \LV{\begin{align*}
    2 - \ln 2 - O(k^{-1}) \ge 1.306 - O(k^{-1}).
  \end{align*}}
\end{theorem}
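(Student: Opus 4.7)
The plan is to argue by contradiction: assume an algorithm $A$ achieves $\Perf(A) = \lambda$, and derive $\lambda \geq 2 - \ln 2 - O(1/k)$ from incompatible constraints on the algorithm's dynamics. I would normalize so that $t_k = 1$ and set $r := \lambda/(k+1)$. The same chain of inequalities used in Lemma~\ref{lem:gamma_upper_bound} shows that consecutive placement times satisfy $t_{i+1} \leq t_i/(1-r)$, and iterating gives the lower bound
\[
  m(T) \;\geq\; \ln T / \ln(1/(1-r)) \;=\; (1 - O(r))(k+1) \ln T / \lambda
\]
on the number of insertions (equivalently, deletions) performed by time $T$.

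The key structural observation is that each deletion at time $\tau$ merges two adjacent active-checkpoint intervals into a single interval whose \emph{combined} length is at most $r\tau$; this is strictly stronger than the per-interval cap $r\tau$ that the two original intervals individually satisfy. I would use this to control the leftmost active checkpoint $L(T)$, which is non-decreasing, bounded by $L(T) \leq rT$, and jumps only when the current leftmost is deleted: each jump from value $v$ to $v'$ at time $\tau$ forces $\tau \geq v'/r$. More generally, at time $T$ the surviving initial checkpoints partition $[0, p_{\max}(T)]$ into merged ``blocks'' each of length $\leq rT$, and a block of length $\ell$ cannot have existed before time $\ell/r$ (its last-merge time must fit the sum-constraint).

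To extract the constant $2 - \ln 2$, I would integrate these competing constraints over $T \in [1, 2]$. The insertion rate forces the total count $m(T)$ to grow at rate $\sim (k+1)/(\lambda T)$, while the rate at which initial checkpoints can be absorbed into merged blocks is bounded above by a covering-plus-formation-time constraint that effectively behaves like $(1 - 1/T)$. The integral $\int_1^2 (1 - 1/T)\, dT = 1 - \ln 2$ then provides the nontrivial contribution beyond the baseline $\lambda \geq 1$, yielding $\lambda \geq 1 + (1 - \ln 2) = 2 - \ln 2 - O(1/k)$; the lower-order corrections absorb $\ln(1/(1-r)) = r + O(r^2)$ together with boundary effects at time~$1$, where the distinction between initial and newly inserted checkpoints interacts with the growing rightmost interval.

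The hardest step is this final integration: translating the discrete, per-step merge constraints into a clean continuous inequality requires identifying the right potential---likely a weighted sum over block-formation times of initial checkpoints---and showing that it is tightly governed above by the insertion rate and below by the covering requirement. Carefully handling blocks whose formation absorbs newly inserted checkpoints (i.e., blocks straddling time~$1$), and propagating the $O(1/k)$ error terms cleanly through the chain of estimates, constitute the bulk of the technical work, though the conceptual structure mirrors the averaging-style analyses already carried out in Sections~\ref{sec:karls_algorithm} and~\ref{sec:jakubs_algorithm}.
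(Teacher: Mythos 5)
Your plan hits several of the right notes (the window $[t_k,2t_k]$, the bound $\le r\tau$ on the length of a merged interval, a coverage constraint over $[0,t_k]$, and a logarithmic sum producing the $\ln 2$), but it swaps in the wrong counting constraint, and that is where the argument would collapse. You lower-bound the number of steps $m(T)$ via $t_{i+1}\le t_i/(1-r)$ (the discrepancy of the rightmost interval, as in Lemma~\ref{lem:gamma_upper_bound}). That controls the \emph{total} number of insert/delete operations, but the coverage argument needs a lower bound on the number of deletions that occur \emph{inside} $[0,t_k]$: steps that delete a checkpoint already in $(t_k,T]$ do nothing to absorb the initial interval, so $m(T)$ alone says nothing about the rate at which $[0,t_k]$ gets merged. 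The paper instead uses a different pigeonhole: at any time $T$ the number of active checkpoints in $(t_k,T]$ equals the number of deletions from $[0,t_k]$ so far; if that number is $d$, the longest interval inside $[t_k,T]$ has length at least $(T-t_k)/d$, and the discrepancy requirement forces $d\ge (k/q)(1-t_k/T)$. This is the content of Lemma~\ref{lem:type2_bound}, and it is what caps the creation time $\tau_i$ and hence the length $\le q\tau_i/k$ of the $i$-th merged interval.

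Once you have that, the argument is the paper's: the merged (``type 2'') intervals and the untouched (``type 1'') intervals partition $[0,t_k]$, type~1 lengths are each $\le qt_k/k$, type~2 lengths are $\le t_k/(k/q-m-i)$, and the counting identity $k_1 = k + m - k/q + 1$ (Lemma~\ref{lem:relation_kmp}) closes the system; summing gives a harmonic sum $H_{k/q-1} - H_{k/(2q)-1} \approx \ln 2$ and then $q \ge 2 - \ln 2 - O(1/k)$. The ``$2$'' comes from that counting identity, not from a baseline ``$\lambda\ge 1$,'' and the relevant integral is over the deletion count, $\int_0^{k/(2q)} (q/k)/(1 - uq/k)\,du \approx \ln 2$, rather than $\int_1^2 (1-1/T)\,dT$. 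Your integrand $1-1/T$ does integrate to the right constant, but I do not see how your listed constraints produce it; I believe you would have to first derive the pigeonhole bound on deletions from $[0,t_k]$, at which point you would end up with the $\int du/(1-u)$ form (equivalently the harmonic sum) rather than $\int (1-1/T)\,dT$. So the missing idea is the pigeonhole on checkpoints in $(t_k,T]$; the insertion-rate bound is not a substitute for it.
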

\SV{
We present a sketch of the proof of the above theorem in the remainder of this section.

}
\LV{
  The remainder of this section is devoted to the proof of the above theorem.
}
  Let $A=(t,d)$ be an arbitrary checkpointing algorithm and let $q' := \Perf(A)$ be its discrepancy. For convenience, we define $q=kq'/(k+1)$ and bound $q$. Since $q<q'$ this suffices to show a lower bound for the discrepancy of $A$. For technical reasons we add a \emph{gratis checkpoint} at time $t_k$ that must not be removed by $A$. That is, even after the removal of the original checkpoint at $t_k$, there still is the gratis checkpoint active at $t_k$. Clearly, this can only improve the discrepancy. We analyze\LV{ the discrepancy of} $A$ from time $t_k$ until it deleted $k/(2q)$ of the initial checkpoints\footnote{To be precise we should round $\frac k{2q}$ to one of its nearest integers. When doing so, all calculations in the remainder of this section go through as they are; this only slightly increases the hidden constant in the error term $O(k^{-1})$.}. More formally, we let $t'$ be the minimal time at which the number of active checkpoints of $A$ contained in $[0,t_k]$ is $k - k/(2q)$. Note that we might have $t' = \infty$, if the checkpointing algorithm $A$ never deletes $k/(2q)$ points from $[0,t_k]$. However, in this case its discrepancy is lower bounded by $1.5$\LV{.}\SV{, since, for any $i>k$, there are at most $k-k/(2q)$ checkpoints available in the interval $(t_k,t_i]$.}
  
  \begin{lemma}
    If $t' = \infty$, then $\Perf(A) \ge 1.5$.
  \end{lemma}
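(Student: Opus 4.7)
The plan is to show that if $A$ never deletes enough initial checkpoints, then the interval $(t_k,T]$ becomes too sparsely populated as $T\to\infty$, creating a long subinterval and forcing the discrepancy up.

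First I would record a simple counting observation. Every new checkpoint placed by $A$ is paired with a deletion of an existing one, and new checkpoints are placed at times strictly greater than $t_k$. Hence at any time $T>t_k$, the number $m$ of new algorithm-placed checkpoints lying in $(t_k,T]$ equals the number of originals among $t_1,\ldots,t_k$ that $A$ has so far removed. The hypothesis $t'=\infty$ says exactly that $m<k/(2q)$ at every such time.

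Next I would apply pigeonhole. The gratis checkpoint at $t_k$, the $m$ new algorithm-placed checkpoints in $(t_k,T]$, and the current endpoint $T$ split $(t_k,T]$ into $m+1$ subintervals. Since $m+1\le k/(2q)$ up to integer rounding (the same rounding already acknowledged in the footnote, absorbed into the error term), the longest of these has length at least
\[
  \frac{T-t_k}{m+1}\;\ge\;\frac{2q(T-t_k)}{k}\;-\;O(1/k).
\]
Because the gratis checkpoint can only refine the original $k+1$-interval partition, this subinterval is contained in one of the intervals used to define $\bar\ell_T$, so the same lower bound transfers to $\bar\ell_T$. Substituting into $q(A,T)=(k+1)\bar\ell_T/T$ and using $q=kq'/(k+1)$ gives $q(A,T)\;\ge\;2q'(1-t_k/T)-O(1/k)$.

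Finally I would take $T\to\infty$. Since the checkpoint times of $A$ tend to infinity, $T$ can be made arbitrarily large, so $\Perf(A)=\sup_T q(A,T)\ge 2q'-o(1)$. Combined with $\Perf(A)=q'$, this forces $q'\le o(1)$, which contradicts the trivial lower bound $q'\ge 1$ that every checkpointing algorithm satisfies. Thus the case $t'=\infty$ in fact forces \emph{unbounded} discrepancy, and in particular $\Perf(A)\ge 1.5$, as claimed. The only real obstacle is bookkeeping: carefully absorbing the integer rounding of $k/(2q)$ into the $O(1/k)$ error, and noting explicitly that a subinterval cut out by the gratis checkpoint is contained in some non-gratis interval, so that our lower bound transfers correctly to $\bar\ell_T$.
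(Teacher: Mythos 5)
Your proof is correct and follows the same pigeonhole strategy as the paper, but you count the active checkpoints in $(t_k,T]$ more tightly, and this yields a strictly stronger conclusion. You correctly observe that at any time $T$ the number of active checkpoints in $(t_k,T]$ equals the number of initial checkpoints removed from $[0,t_k]$, which under $t'=\infty$ is strictly less than $k/(2q)$. The paper instead bounds this same count by the much larger quantity $k-k/(2q)$; that looser bound is exactly what makes their computation land at $\Perf(A)\ge\frac{2q}{2q-1}\ge\frac{2\Perf(A)}{2\Perf(A)-1}$, which solves to $\Perf(A)\ge 1.5$. Your tighter count pushes the identical calculation to $\Perf(A)\ge 2q'-o(1)=2\Perf(A)-o(1)$, an outright contradiction with $\Perf(A)\ge 1$. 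So you have really shown that $t'=\infty$ is impossible for $k$ large enough, which makes the lemma vacuously true; since its only role in the overall argument is to let the main proof restrict to the case $t'<\infty$, this serves just as well. A small remark: the $o(1)$ you carry from rounding $k/(2q)$ means the contradiction formally kicks in only for $k$ large, but that is fine because Theorem~\ref{thm:lower} is itself asymptotic, and with a bit more care in the rounding the error term disappears entirely.
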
 
  \LV{
  \begin{proof}
    Consider a large $i > k$ and the algorithm's discrepancy at time $t_i$. By assumption, there are at most $k - k/(2q)$ active checkpoints in $(t_k,t_i]$. Hence, by comparing with an equidistant spread we can bound the discrepancy (at time $t_i$) by
    \begin{align*}
      \Perf(A) \ge \frac{k+1}{t_i} \cdot \frac{t_i - t_k}{k(1-1/(2q))} \ge \frac{2q}{2q-1} \Big(1 - \frac{t_k}{t_i} \Big).
    \end{align*}
    Letting $i \to \infty$, so that $t_i \to \infty$, we obtain
    \begin{align*}
      \Perf(A) \ge \frac{2q}{2q-1} \ge \frac{2 \Perf(A)}{2 \Perf(A) - 1},
    \end{align*}
    (by definition of $q$ and $x \mapsto \frac{2x}{2x-1}$ being monotonically decreasing). This inequality solves to the desired $\Perf(A) \ge 1.5$. ~
  \end{proof}
  }

  Hence, in the following we can assume that $t' < \infty$. 
  We partition the intervals that exist at time $t'$ into three types:
  \begin{enumerate}
    \item Intervals existing both at time $t_k$ and $t'$. These\LV{ intervals} are contained in $[0,t_k]$.
    \item Intervals that are contained in $[0,t_k]$, but did not exist at time $t_k$. These\LV{ intervals} were created by the removal of some checkpoint in $[0,t_k]$ after time~$t_k$.
    \item Intervals contained in $[t_k,t']$.
  \end{enumerate}
  Note that we need the gratis checkpoint at $t_k$ in order for these definitions to make sense, as otherwise there could be an interval overlapping $t_k$.

  Let $\mathcal{L}_i$ denote the set of intervals of type $i$ for $i \in\{1,2,3\}$, and set $k_i := |\mathcal{L}_i|$. Let $\mathcal{L}_2 = \{I_1,\ldots,I_{k_2}\}$, where the intervals are ordered by their creation times $\tau_1 \le \ldots \le \tau_{k_2}$. 
  Since each interval in $\mathcal{L}_2$ contains at least one deleted point we have \SV{$k_2\leq k/(2q)$,}
  \LV{\[
    k_2 \le \frac k{2q},
  \]}%
  and we set $m := \frac k{2q} - k_2$. Then $m$ counts the number of deleted checkpoints in $[0,t_k]$ that did not create an interval in $\mathcal{L}_2$, but some strict sub-interval of an interval in $\mathcal{L}_2$. \LV{We call these $m$ removed checkpoints \emph{free}.}
  
  \LV{We first bound the length of the intervals in $\mathcal{L}_1$ and $\mathcal{L}_2$.} \SV{Since the intervals in $\mathcal{L}_1$ exist at time $t_k$, we can bound their length as follows.}

  \begin{lemma}\label{lem:type1_bound} The length of any interval in $\mathcal{L}_1$ is at most $qt_k/k$.
  \end{lemma}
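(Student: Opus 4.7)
The plan is a direct application of the discrepancy bound at time $t_k$. First, I would recall that by definition of discrepancy, $q(A, t_k) = (k+1) \bar{\ell}_{t_k}/t_k \le \Perf(A) = q'$, so every interval existing at time $t_k$ has length at most $\bar{\ell}_{t_k} \le q' t_k/(k+1)$.

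Next, I would translate this bound into the $q$ variable. Since the lemma was set up with $q = kq'/(k+1)$, we have $q'/(k+1) = q/k$, hence every interval present at time $t_k$ has length at most $q t_k/k$.

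Finally, I would note that by definition every interval in $\mathcal{L}_1$ is one that exists at time $t_k$ (and survives until $t'$, but that extra condition is not needed here). Therefore its length, measured at any time, equals its length at $t_k$, which is bounded by $q t_k/k$.

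There is no real obstacle here; this is essentially a bookkeeping step that rewrites the definition of discrepancy at the single time $t_k$ in the notation of the lower bound argument. The gratis checkpoint introduced earlier is implicitly used to guarantee that a type-1 interval is well-defined and entirely contained in $[0,t_k]$, but this plays no role in the numerical bound itself.
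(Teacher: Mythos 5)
Your proposal is correct and takes essentially the same route as the paper: both apply the discrepancy bound $q(A,t_k) \le q'$ to the intervals present at time $t_k$ and then rewrite $q'/(k+1)$ as $q/k$. The paper's own proof is just a more terse version of the same bookkeeping.
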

  \LV{
  \begin{proof}
  As all intervals in $\mathcal{L}_1$ already are present at time $t_k$ and the algorithm has discrepancy $q'$, we have for any $I \in \mathcal{L}_1$
  \[(k+1)|I| / t_k \le q' = (k+1)q/k.\]
  The bound follows. ~
  \end{proof}
  }

  \SV{The creation time $\tau_i$ of the $i$-th interval in $\mathcal{L}_2$ cannot be too late, as otherwise there would not be sufficiently many checkpoints in $[t_k,\tau_i]$ to guarantee discrepancy $q$. That allows to bound the length of these intervals.}

  \begin{lemma}\label{lem:type2_bound} The length of any interval $I_i\in \mathcal{L}_2$ is at most
  \[|I_i| \leq \frac{t_k}{k/q - m -i}.\]
  \end{lemma}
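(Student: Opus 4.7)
The plan is to prove two inequalities that combine to give the claim. First, since $I_i$ is active at time $\tau_i$ (it is created at that moment and must survive until $t'$ to be in $\mathcal{L}_2$), the assumption $\Perf(A) = (k+1)q/k$ immediately gives $|I_i| \le q\tau_i/k$. So it suffices to bound $\tau_i \le kt_k/(k - q(m+i))$, after which $|I_i| \le q\tau_i/k \le qt_k/(k-q(m+i)) = t_k/(k/q - m - i)$ as desired.

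To bound $\tau_i$, I would count the intervals lying in $[t_k, \tau_i]$ at time $\tau_i$. Let $d_i$ denote the number of original checkpoints in $[0,t_k]$ that $A$ has deleted by time $\tau_i$. Because each step of $A$ simultaneously deletes one checkpoint and inserts a new one at the current time, the algorithm's active checkpoints inside $(t_k,\tau_i]$ number exactly $d_i$; together with the gratis checkpoint at $t_k$ they partition $[t_k,\tau_i]$ into $d_i$ intervals. Each has length at most $q\tau_i/k$ by the discrepancy bound, so summing their lengths gives $\tau_i - t_k \le d_i q\tau_i /k$, which rearranges to $\tau_i \le kt_k/(k - qd_i)$.

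The heart of the argument is then to show $d_i \le m + i$. For each $I_j \in \mathcal{L}_2$, let $\sigma_j$ be the number of original checkpoints in its interior; these $\sigma_j$ points must all be deleted between time $t_k$ and time $\tau_j$, with the last deletion being precisely the one that creates $I_j$. Classifying every deletion in $[0,t_k]$ by which $\mathcal{L}_2$-interval contains it is well-defined, because intervals in $\mathcal{L}_1$ are unchanged between $t_k$ and $t'$ and so contain no interior checkpoints (hence no deletions); thus total deletions split into the $k_2$ creating deletions and $\sum_j(\sigma_j-1) = m$ free deletions. By time $\tau_i$, the only creating deletions that have occurred are those for $I_1,\ldots,I_i$, contributing $i$ to $d_i$, while free deletions (for any $I_j$, past or future) contribute at most $m$ in total. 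Combining $d_i \le m + i$ with the bounds above yields the lemma.

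The main obstacle is the combinatorial bookkeeping in the last step: I must verify carefully that the classification of deletions into creating vs.\ free, relative to the final partition $\mathcal{L}_1 \cup \mathcal{L}_2$, really does cap the free count by $m$ at every intermediate time $\tau_i$ (in particular one has to rule out that a deletion inside a future $I_{j'}$ with $j' > i$ could somehow be undone or recounted). Once the partition is fixed and each deleted original checkpoint is placed into a unique $\mathcal{L}_2$-interval by its physical location, the counts are straightforward and the interval-length argument in paragraph two closes the proof.
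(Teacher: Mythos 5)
Your proposal is correct and takes essentially the same approach as the paper: bound $|I_i| \le q\tau_i/k$ from the discrepancy guarantee, then bound $\tau_i$ by observing that at time $\tau_i$ the segment $[t_k,\tau_i]$ is split into at most $m+i$ intervals (you count the $d_i \le m+i$ checkpoints inserted there, the paper counts the intervals directly) and that the longest of these has length at least $(\tau_i - t_k)/(m+i)$ but at most $q\tau_i/k$. The extra bookkeeping you worry about in the last paragraph is fine: the classification of each deleted original checkpoint into a unique $\mathcal{L}_2$-interval is a static fact about the final partition, the creation times $\tau_j$ are strictly increasing (one merge per step), and the free-deletion count is monotone in time and bounded by $m$ at $t'$, hence at every $\tau_i \le t'$.
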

\LV{
  \begin{proof} As the algorithm has discrepancy $q'$, we know 
  \begin{equation}
  |I_i| \leq q\tau_i/k. \label{eq:type2_qual}
  \end{equation}
  In the following we bound $\tau_i$, the time of creation of $I_i$.
  At time $\tau_i$ there are at most $m+i$ intervals in $\mathcal{L}_3$, since at most $m$ free checkpoints and $i$ checkpoints from the creation of $I_1,\ldots,I_i$ are available. Comparing with an equidistant spread of $m+i$ checkpoints in $[t_k,\tau_i]$ and the algorithm's discrepancy, the longest interval $L$ in $[t_k,\tau_i]$ (at time $\tau_i$) has length 
  \[\frac{\tau_i-t_k}{m+i} \leq |L| \leq \frac{q\tau_i}{k}.\]
  Rearranging the outer inequality yields a bound on $\tau_i$ of 
  \begin{equation*}
  \tau_i \leq \frac{kt_k}{k-(m+i)q}.
  \end{equation*}
  Substituting this into \eqref{eq:type2_qual} yields the desired result. ~
  \end{proof}
}
  Furthermore, we need a relation between $k_1,k,m$, and $q$.

  \begin{lemma} \label{lem:relation_kmp}
    We have \SV{$k_1 = k + m - k/q + 1$.}
  \LV{\[
      k_1 = k + m - k/q + 1.
  \]}
  \end{lemma}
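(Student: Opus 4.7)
My plan is to prove the lemma by a double-counting argument: count the total number of intervals in $[0, t']$ at time $t'$, determine $k_3$ directly from the stopping condition, and use the identity $k_2 = k/(2q) - m$ coming from the definition of $m$ to extract $k_1$.

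First I identify the set of dividing points of $[0, t']$ at time $t'$. These are $0$ (implicit), the $k$ active checkpoints of $A$, the gratis checkpoint at $t_k$, and the current time $t'$ (which, since the algorithm just placed a checkpoint there, coincides with one of $A$'s active checkpoints). The crucial role of the gratis checkpoint is to guarantee that $t_k$ is always a dividing point, regardless of whether $A$'s original checkpoint at $t_k$ is still active. This is exactly what ensures the partition $\mathcal{L}_1 \cup \mathcal{L}_2 \cup \mathcal{L}_3$ makes sense.

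To compute $k_3$: by the definition of $t'$, there are $k - k/(2q)$ active checkpoints of $A$ in $[0, t_k]$; since $A$ maintains $k$ active checkpoints in total, the remaining $k/(2q)$ lie in $(t_k, t']$. Together with $t_k$ (the gratis) as their left boundary, they form exactly $k_3 = k/(2q)$ intervals. To compute $k_1 + k_2$, I count the intervals in $[0, t_k]$, whose dividing points are $0$, $t_k$ (from the gratis), and the active checkpoints of $A$ strictly inside $(0, t_k)$. Treating the gratis as a distinct dividing point (which is the case once the original $t_k$ has been removed, an assumption that can be made without loss of generality by a small perturbation that does not increase the discrepancy), one obtains $k_1 + k_2 = k - k/(2q) + 1$.

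Combining these two counts with $k_2 = k/(2q) - m$ gives
\[
  k_1 = (k_1+k_2) - k_2 = \bigl(k - k/(2q) + 1\bigr) - \bigl(k/(2q) - m\bigr) = k + m - k/q + 1,
\]
which is the claimed identity. The main subtlety of the argument is the bookkeeping around the gratis checkpoint and the original $t_k$: once the convention that the gratis always contributes a distinct dividing point at $t_k$ is fixed, the rest is arithmetic.
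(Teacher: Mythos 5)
Your argument is correct and takes essentially the same route as the paper: both count the intervals of $\mathcal{L}_1 \cup \mathcal{L}_2$ in $[0,t_k]$ (equivalently, the surviving checkpoints there) at time $t'$, balance this against the $k/(2q)$ deletions, and use $k_2 = k/(2q)-m$ to isolate $k_1$. You are slightly more explicit than the paper in noting that the ``$+1$'' relies on the original checkpoint at $t_k$ having been removed so that the gratis checkpoint contributes a distinct dividing point; the paper makes the same assumption silently, and the resulting $\pm 1$ is in any case absorbed into the $O(k^{-1})$ error term of Theorem~\ref{thm:lower}.
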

\LV{
  \begin{proof}
    As the intervals in $\mathcal{L}_1$ and $\mathcal{L}_2$ partition $[0,t_k]$, there are $k_1+k_2$ intervals left in $[0,t_k]$ at time $t'$. Note that each but one such interval has its left endpoint among the $k$ active checkpoints from time $t_k$ (the one exception having as left endpoint~0). Hence, there are $k_1+k_2 - 1$ checkpoints left in $[0,t_k]$. Comparing with the number $k_2+m$ of deleted checkpoints in $[0,t_k]$ until time $t'$ and their overall number $k$ yields
    \[
      (k_2 + m) + (k_1 + k_2 - 1) = k.
    \]
    Rearranging this and plugging in $k_2 = \frac k{2q} - m$ (which holds by definition of $m$) yields the desired result. ~
  \end{proof}
}
  Now we use our bounds on the length of intervals from $\mathcal{L}_1$ and $\mathcal{L}_2$ to find a bound on $q$. 
  Note that the intervals in $\mathcal{L}_1$ and $\mathcal{L}_2$ partition $[0,t_k]$, so that
  \[
  	t_k = \sum_{I \in \mathcal{L}_1} |I| + \sum_{I' \in \mathcal{L}_2} |I'|.
  \]
  Using Lemmas~\ref{lem:type1_bound} and~\ref{lem:type2_bound}, we obtain
  \[
  	t_k \leq k_1\frac{qt_k}{k} + \sum_{i = 1}^{k_2} \frac{t_k}{k/q - m - i}.
  \]
\SV{
  Plugging in Lemma~\ref{lem:relation_kmp} and simplifying yields
  \[
  q \geq 2 - m\frac pk - O(k^{-1}) - H_{k/q - m - 1} + H_{k/(2q)-1},
  \]
  where $H_n$ is the $n$th harmonic number. Now, observing $m \frac qk + H_{k/q - m - 1} \le H_{k/q - 1}$ and using the asymptotics of $H_n$, we get the desired bound $q \ge 2 - \ln(2) - O(k^{-1})$.
}
\LV{
  Substituting $k_1$ using Lemma~\ref{lem:relation_kmp} yields
  \begin{align}
  	t_k &\leq \left[k + m - k/q + 1\right] qt_k/k + \sum_{i=1}^{k/(2q)-m} \frac{t_k}{k/q - m - i}\notag\\
  		&= t_k \left(q - 1 + m\frac{q}{k} + O(k^{-1}) + \sum_{i = 1}^{k/(2q)-m} \frac{1}{k/q - m - i}\right). \label{eq:bound_involving_p}
  \end{align}
  Recall that $H_n=\sum_{1\leq i \leq n} i^{-1}$ is the $n$-th harmonic number. Rearranging \eqref{eq:bound_involving_p} yields
  \begin{align*}
  q &\geq 2 - m\frac qk - O(k^{-1}) - H_{k/q - m - 1} + H_{k/(2q)-1}.
  \end{align*}
  Observe that we have $m \frac qk + H_{k/q - m - 1} \le H_{k/q - 1}$, implying
  \begin{align*}
    q &\ge 2 + H_{k/(2q) - 1} - H_{k/q - 1} - O(k^{-1})  \\
    &\ge 2 + H_{k/(2q)} - H_{k/q} - O(k^{-1}),
  \end{align*}
  since we can hide the last summands of $H_{k/(2q)}$ and $H_{k/q}$ by $O(k^{-1})$. 
  In combination with the asymptotic behavior of $H_n = \ln n + \gamma + O(n^{-1})$, where $\gamma$ is the Euler-Mascheroni constant, we obtain 
  \begin{align*}
    q &\ge 2 + \ln(k/(2q)) - \ln(k/q) - O(k^{-1})  \\
    &= 2 - \ln(2) - O(k^{-1}).
  \end{align*}
  This finishes the proof of Theorem~\ref{thm:lower}. 
}

\bibliographystyle{plain}
\bibliography{bibliography}
\end{document}